\theoremstyle{plain}
\newtheorem{theorem}{\protect\theoremname}
\theoremstyle{plain}
\theoremstyle{plain}
\newtheorem{corollary}[theorem]{\protect\corollaryname}
\theoremstyle{plain}
\newtheorem{lemma}[theorem]{\protect\lemmaname}
\theoremstyle{definition}
\newtheorem{example}[theorem]{\protect\examplename}
\theoremstyle{definition}
  \providecommand{\corollaryname}{Corollary}
  \providecommand{\examplename}{Example}
  \providecommand{\lemmaname}{Lemma}
  \providecommand{\propositionname}{Proposition}
  \providecommand{\theoremname}{Theorem}
  \providecommand{\definitionname}{Definition}
\begin{document}

%
\title{Complete Weight Enumerators of Some Linear Codes}
%
%
%

\author{Shudi Yang    and
        Zheng-An Yao    
	
\thanks{S.D. Yang is with the
              Department of Mathematics,
Sun Yat-sen University, Guangzhou 510275 and School of Mathematical
Sciences, Qufu Normal University, Shandong 273165, P.R.China.\protect\\

Z.-A. Yao is with the
               Department of Mathematics,
Sun Yat-sen University, Guangzhou 510275, P.R.China.\protect\\
	\protect\\
	E-mail: yangshd3@mail2.sysu.edu.cn,~{mcsyao@mail.sysu.edu.cn}}
\thanks{Manuscript received *********; revised ********.}
}

%
%

\markboth{Journal of \LaTeX\ Class Files,~Vol.~13, No.~9, September~2014}%
{Shell \MakeLowercase{\textit{et al.}}: Bare Demo of IEEEtran.cls for Journals}
%



\maketitle

\begin{abstract}
Linear codes have been an interesting topic in both theory and practice for many years.
In this paper, for an odd prime $p$, we determine the explicit complete weight enumerators of two classes of linear codes over $\mathbb{F}_p$ and they may have applications in cryptography and secret sharing schemes. Moreover, some examples are included to illustrate our results.
\end{abstract}

\begin{IEEEkeywords}
Linear code, complete weight enumerator, quadratic form, Gauss sum.
\end{IEEEkeywords}

%
\IEEEpeerreviewmaketitle

\section{Introduction}\label{sec:intro}
%
%
%
%
\IEEEPARstart{T}{hroughout} this paper, let $p$ be an odd prime. Denote by $\mathbb{F}_p$ a
finite field with $p$ elements. An $[n, \kappa, \delta]$ linear code
$C$ over $\mathbb{F}_p$ is a $\kappa$-dimensional subspace of
$\mathbb{F}_p^n$ with minimum distance $\delta$ \cite{macwilliams1977theory}.

Let $A_i$ denote the number of codewords with Hamming weight
$i$ in a linear code $C$ of length $n$. The (ordinary) weight enumerator of  $C$  is defined by
$$A_0+A_1z+A_2z^2+\cdots+A_nz^n,$$
where $A_0=1$. The sequence $(A_0,A_1,A_2,\cdots,A_n)$ is called the (ordinary) weight
distribution of the code $C$.

The complete weight enumerator of a code $C$ over $\mathbb{F}_p$ enumerates the codewords according to the number of symbols of each kind contained in each codeword. Denote the
field elements by $\mathbb{F}_p=\{w_0,w_1,\cdots,w_{p-1}\}$, where $w_0=0$.
Also let $\mathbb{F}_p^*$ denote $\mathbb{F}_p\backslash\{0\}$.
For a codeword $\mathsf{c}=(c_0,c_1,\cdots,c_{n-1})\in \mathbb{F}_p^n$, let $w[\mathsf{c}]$ be the
complete weight enumerator of $\mathsf{c}$ defined as
$$w[\mathsf{c}]=w_0^{k_0}w_1^{k_1}\cdots w_{p-1}^{k_{p-1}},$$
where $k_j$ is the number of components of $\mathsf{c}$ equal to $w_j$, $\sum_{j=0}^{p-1}k_j=n$.
The complete weight enumerator of the code $C$ is then
$$\mathrm{CWE}(C)=\sum_{\mathsf{c}\in C}w[\mathsf{c}].$$

The weight distribution of a linear code has attracted a lot of interest for many years and we
refer the reader to \cite{Zheng2015weight,ding2011,ding2013hamming,dinh2015recent,feng2008weight,li2014weight,luo2008weight,sharma2012weight,vega2012weight,wang2012weight,yuan2006weight,zheng2013weight} and references therein for an overview of the related researches. It is not difficult to see that the complete weight enumerators are just the (ordinary) weight
enumerators for binary linear codes. While for nonbinary linear codes, the weight enumerators can be obtained from their complete weight enumerators.

 The information of the complete weight enumerator of a linear code is of vital use both in theories and in practical applications. For instance, Blake and Kith investigated the complete weight enumerator of Reed-Solomon codes and showed that they could be helpful in soft decision decoding~\cite{Blake1991,kith1989complete}. In~\cite{helleseth2006}, the study of the monomial and quadratic bent functions was related to the complete weight enumerators of linear codes. It was illustrated by Ding $et~al.$~\cite{ding2007generic,Ding2005auth} that the complete weight enumerator can be applied to calculate the deception probabilities of certain authentication codes. In~\cite{chu2006constant,ding2008optimal,ding2006construction}, the authors studied the complete weight enumerators of some constant
composition codes and presented some families of optimal constant composition codes.

However, it is usually an extremely difficult problem to evaluate the complete
weight enumerator of linear codes and there are few information on this topic in literature besides the above mentioned~\cite{Blake1991,kith1989complete,chu2006constant,ding2008optimal,ding2006construction}.
Kuzmin and Nechaev considered the
generalized Kerdock code and related linear codes over Galois rings and determined their complete weight enumerators in~\cite{kuzmin1999complete} and \cite{kuzmin2001complete}. Very recently, Li, Yue and Fu~\cite{li2015complete} obtained the complete weight enumerators of some cyclic codes by using Gauss sums. In this paper, we shall determine the complete weight enumerators of a class of linear codes over finite fields.

Let $\bar{D}=\{d_1,d_2,\cdots,d_n\}\subseteq \mathbb{F}_{p^m}$. Denote by $\mathrm{Tr}$ the trace function from $\mathbb{F}_{p^m}$ to $\mathbb{F}_{p}$. A linear code associated with $\bar{D}$ is defined by
\begin{equation*}\label{def:CD'}
    C_{\bar{D}}=\{(\mathrm{Tr}(ad_1),\mathrm{Tr}(ad_2),\cdots,\mathrm{Tr}(ad_n)):
       a\in \mathbb{F}_{p^m}\},
\end{equation*} and $\bar{D}$ is called the defining set of this code $C_{\bar{D}}$~(see \cite{dingkelan2014binary,ding2015twodesign,ding2015twothree} for details).

It should be noted that the authors in~\cite{dingkelan2014binary,ding2015twodesign} and~\cite{ding2015twothree} gave the definition of the code $C_{\bar{D}}$ and the defining set $\bar{D}$. The authors in~\cite{dingkelan2014binary} established binary linear codes $C_{\bar{D}}$ with three weights. In~\cite{ding2015twodesign}, Ding presented the general construction of the linear codes and determined their weights especially for three special codes. The authors in~\cite{ding2015twothree} presented the defining set $\bar{D}=\{x\in \mathbb{F}_{p^m}^*:\mathrm{Tr}(x^{2})=0\}$ to construct a class of linear codes $C_{\bar{D}}$ with two and three nonzero weights and investigated their application in secret sharing.

In this paper, the defining set $D$ of the code $C_D$ is given by
\begin{equation}\label{def:D}
     D=\{x\in \mathbb{F}_{p^m}^*:\mathrm{Tr}(x^{2d})=0\}=\{d_1,d_2,\cdots,d_n\}
\end{equation}
for an integer $d$ coprime to $(p^m-1)/2$, i.e., $\mathrm{gcd}(d,(p^m-1)/2)=1$.
Let \begin{equation}\label{def:CD}
    C_{D}=\{(\mathrm{Tr}(ad_1),\mathrm{Tr}(ad_2),\cdots,\mathrm{Tr}(ad_n)):
       a\in \mathbb{F}_{p^m}\}.
\end{equation}

Note that $\mathrm{gcd}(d,(p^m-1)/2)=1$ leads to
$$\{x^{2d}:x\in \mathbb{F}_{p^m}^*\}=\{x^{2}:x\in \mathbb{F}_{p^m}^*\},$$
which means that
\begin{eqnarray*}
    D&=&\{x\in \mathbb{F}_{p^m}^*:\mathrm{Tr}(x^{2d})=0\}\\
     &=&\{x\in \mathbb{F}_{p^m}^*:\mathrm{Tr}(x^{2})=0\}=\bar{D}.
\end{eqnarray*}
Therefore the codes $C_D$ of \eqref{def:CD} and $C_{\bar{D}}$ depicted in~\cite{ding2015twothree} are exactly the same code. Thus we only focus on the defining set $$ \bar{D}=\{x\in \mathbb{F}_{p^m}^*:\mathrm{Tr}(x^{2})=0\}=\{d_1,d_2,\cdots,d_n\}$$ in the sequel and we denote it by $D$ for convenience.

More naturally, a generalization of the code $C_D$ is given by
\begin{equation}\label{def:CDb}
    C_{D,b}=\{(\mathrm{Tr}(ad_1)+b,\mathrm{Tr}(ad_2)+b,\cdots,\mathrm{Tr}(ad_n)+b):a\in \mathbb{ F}_{p^m},b\in \mathbb{F}_p\}.
\end{equation}

We will study the complete weight enumerators of $C_D$ and the generalized code $C_{D,b}$, and then their weight enumerators as well. As it turns out that, $C_{D,b}$ is a linear code with five and seven nonzero weights, while the code $C_D$ is a linear code with two and three nonzero weights as was shown in~\cite{ding2015twothree}. This means that the two classes
of linear codes may be of use in cryptography~\cite{mceliece1978public} and secret sharing schemes~\cite{carlet2005linear}. We should mention that the main idea of solving the complete weight enumerators of $C_D$ and $C_{D,b}$ indeed comes from~\cite{ding2015twodesign,ding2015twothree} which were quite inspiring and very well-written and we will employ some results of~\cite{ding2015twothree} in the consequence sections.

The main results of this paper are given below.

\begin{theorem}\label{thcwe:CD}
Let $D$ and $C_D$ be defined as above.

(A) If $m \geqslant 3$ is odd, then the code
$C_D$ is a $[p^{m-1}-1,m]$ linear code over $\mathbb{F}_{p}$ with the
complete weight enumerator
\begin{eqnarray*}
    \mathrm{CWE}(C_D)&=&w_0^{p^{m-1}-1}+(p^{m-1}-1)w_0^{p^{m-2}-1}\prod_{\rho=1}^{p-1}w_{\rho}^{p^{m-2}}\\
&&+\frac{p-1}{2}\left(p^{m-1}+p^{\frac{m-1}{2}}\right)w_0^{p^{m-2}-1+(p-1)p^{\frac{m-3}{2}}}\prod_{\rho=1}^{p-1}w_{\rho}^{p^{m-2}-p^{\frac{m-3}{2}}}\\
&&+\frac{p-1}{2}\left(p^{m-1}-p^{\frac{m-1}{2}}\right)w_0^{p^{m-2}-1-(p-1)p^{\frac{m-3}{2}}}\prod_{\rho=1}^{p-1}w_{\rho}^{p^{m-2}+p^{\frac{m-3}{2}}}\\
\end{eqnarray*}

(B) If $m \geqslant2$ is even, then the code
$C_D$ over $\mathbb{F}_{p}$ has parameters
$$[ p^{m-1}-1-(-1)^{\frac{m}{ 2}(\frac{p-1}{2})^2}(p-1)p^{\frac{m-2}{2}},m]$$ and the
complete weight enumerator
\begin{eqnarray*}
&&\mathrm{C}\mathrm{WE}(C_D)\\
   &&=w_0^{p^{m-1}-1-(-1)^{\frac{m}{2}(\frac{p-1}{2})^2}(p-1)p^{\frac{m-2}{2}}}+\\
&&~~\left(\!p^{m\!-\!1}\!-\!1-\!(\!-\!1)^{\frac{m}{2}(\frac{p\!-\!1}{2})^2}(p\!-\!1)
p^{\frac{m\!-\!2}{2}}\!\right)w_0^{p^{m\!-\!2}\!-\!1-
\!(\!-1\!)^{\frac{m}{2}(\frac{p\!-\!1}{2})^2}(p\!-\!1)p^{\frac{m\!-\!2}{2}}}\prod_{\rho=1}^{p-1}w_{\rho}^{p^{m-2}}\!+\\
&&~~(p-1)\left(p^{m-1}\!+\!(-1)^{\frac{m}{2}(\frac{p-1}{2})^2}p^{\frac{m-2}{2}}\right)
w_0^{p^{m-2}-1}\prod_{\rho=1}^{p-1}w_{\rho}^{p^{m-2}-(-1)^{\frac{m}{2}(\frac{p-1}{2})^2}p^{\frac{m-2}{2}}}\\
\end{eqnarray*}

\end{theorem}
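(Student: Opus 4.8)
The plan is to compute, for each $a\in\mathbb{F}_{p^m}$ and each $\rho\in\mathbb{F}_p$, the number $k_\rho(a)=\#\{1\le i\le n:\operatorname{Tr}(ad_i)=\rho\}$ of coordinates of the codeword $\mathbf c_a=(\operatorname{Tr}(ad_1),\dots,\operatorname{Tr}(ad_n))$ that equal $w_\rho$, so that $\mathrm{CWE}(C_D)=\sum_{a\in\mathbb{F}_{p^m}}\prod_{\rho\in\mathbb{F}_p}w_\rho^{\,k_\rho(a)}$; that $a\mapsto\mathbf c_a$ is injective, hence $\dim C_D=m$, is taken from \cite{ding2015twothree}. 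Writing $\zeta_p=e^{2\pi\sqrt{-1}/p}$ and $\psi(x)=\zeta_p^{\operatorname{Tr}(x)}$ for the canonical additive character of $\mathbb{F}_{p^m}$, I would use orthogonality of additive characters twice — once to pick out $\operatorname{Tr}(ad_i)=\rho$ and once to encode the defining condition $\operatorname{Tr}(x^2)=0$ of $D$ — to obtain
\[
k_\rho(a)=\frac1{p^2}\sum_{y,z\in\mathbb{F}_p}\zeta_p^{-y\rho}\sum_{x\in\mathbb{F}_{p^m}^*}\psi\big(zx^2+yax\big).
\]
Everything then reduces to the Weil sums $S(y,z,a)=\sum_{x\in\mathbb{F}_{p^m}}\psi(zx^2+yax)$, since the term $x=0$ only contributes a constant.

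For $z=0$ the sum $S(y,0,a)$ equals $p^m$ or $0$ according as $ya=0$ or not. For $z\neq0$ I would complete the square over $\mathbb{F}_{p^m}$, giving $S(y,z,a)=\zeta_p^{-y^{2}\operatorname{Tr}(a^2)/(4z)}\,\eta'(z)\,G_m$, where $\eta'$ is the quadratic character of $\mathbb{F}_{p^m}$ and $G_m=\sum_{x\in\mathbb{F}_{p^m}^*}\eta'(x)\psi(x)$ its quadratic Gauss sum; here I would invoke the classical evaluation $G_m=(-1)^{m-1}\big(\sqrt{p^*}\,\big)^m$ with $p^*=(-1)^{(p-1)/2}p$, together with the elementary fact that the restriction of $\eta'$ to $\mathbb{F}_p^*$ is $\eta^m$, $\eta$ the quadratic character of $\mathbb{F}_p$. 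This last fact is exactly where the parity of $m$ splits the argument into parts (A) and (B). Substituting back and separating the contributions of $\operatorname{Tr}(a^2)=0$ and $\operatorname{Tr}(a^2)\neq0$, the surviving sum over $y\in\mathbb{F}_p$ is a one–variable quadratic Gauss sum over $\mathbb{F}_p$, which I would again handle by completing the square, using $G_1^{\,2}=p^*$ for $G_1=\sum_{t\in\mathbb{F}_p^*}\eta(t)\zeta_p^t$.

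Carrying this through shows that $k_\rho(a)$ depends on $a$ only through $\operatorname{Tr}(a^2)$, and in fact only through whether $\operatorname{Tr}(a^2)$ vanishes (case (B)) or, more finely, through whether it is zero, a nonzero square, or a nonsquare (case (A)); moreover, among nonzero $\rho$, it is independent of $\rho$. It then remains to count the $a$'s of each kind: $\#\{a\neq0:\operatorname{Tr}(a^2)=0\}=n$ is just $|D|$, and $\#\{a\in\mathbb{F}_{p^m}:\operatorname{Tr}(a^2)=t\}$ for $t\neq0$ is the number of representations of $t$ by the nondegenerate quadratic form $x\mapsto\operatorname{Tr}(x^2)$ on $\mathbb{F}_{p^m}$, a standard Gauss–sum evaluation recorded in \cite{ding2015twothree}; this also supplies the length $n$ in the stated parameters. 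Feeding the multiplicities and the computed $k_\rho$‑values into $\mathrm{CWE}(C_D)=\sum_a\prod_\rho w_\rho^{k_\rho(a)}$ and simplifying yields the two closed forms; useful checks along the way are $\sum_{\rho}k_\rho(a)=n$ for every $a$ and that the multiplicities sum to $p^m$.

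I expect the main obstacle to be the bookkeeping of Gauss–sum signs and the several sub‑cases: one must, for instance, recognise the sign $(-1)^{\frac m2(\frac{p-1}{2})^2}$ of part (B) as $\big((-1)^{(p-1)/2}\big)^{m/2}$, i.e. the sign appearing in $G_m$ for even $m$, and keep track of how $\eta(\operatorname{Tr}(a^2))$ and the square root of $p^*$ recombine into integers. A shortcut specific to $C_D$ is worth noting: for $\lambda\in\mathbb{F}_p^*$ the map $d\mapsto\lambda d$ preserves $D$ and carries $\{d\in D:\operatorname{Tr}(ad)=\rho\}$ bijectively onto $\{d\in D:\operatorname{Tr}(ad)=\lambda\rho\}$, so $k_\rho(a)$ is the same for all $\rho\in\mathbb{F}_p^*$ and equals $\operatorname{wt}(\mathbf c_a)/(p-1)$ while $k_0(a)=n-\operatorname{wt}(\mathbf c_a)$; combined with the weight distribution of $C_D$ from \cite{ding2015twothree} this already gives Theorem \ref{thcwe:CD}. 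I would nonetheless present the character‑sum computation above, since essentially the same machinery is needed for the less symmetric code $C_{D,b}$.
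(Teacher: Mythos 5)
Your proposal is correct and follows essentially the same route as the paper: orthogonality of additive characters to express the symbol counts $k_\rho(a)$ (the paper's $n_a(\rho)$, obtained via $N_a(\rho)$ in Lemma \ref{lem:Na rho}), completing the square and the Gauss-sum evaluations of Lemmas \ref{lm:gauss sum}--\ref{le:eta} to evaluate the resulting Weil sums (Lemma \ref{lem:sum3}), and the quadratic-form count of Lemma \ref{lem:ti} to enumerate the $a$ according to whether $\mathrm{Tr}(a^2)$ is zero, a nonzero square, or a nonsquare. The scaling shortcut you note at the end (the bijection $d\mapsto\lambda d$ of $D$ forces $k_\rho(a)$ to be constant on $\rho\in\mathbb{F}_p^*$, so the complete weight enumerator is already determined by the weight distribution of \cite{ding2015twothree}) is not used in the paper but is a valid and quicker derivation of this particular theorem.
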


\begin{corollary}\label{wd:CD}(See Theorems 1 and 2 of~\cite{ding2015twothree}) With notation as above.

(A) If $m \geqslant 3$ is odd, then $C_D$ has the
weight distribution given in Table \ref{wdtable:Cd m odd}, where $A_i = 0$ for all other
weights $i$ not listed in the table.

\begin{table}[htbp]
\tabcolsep 1mm \caption{The weight distribution of $C_D$ for the case of odd $m$}\label{wdtable:Cd m odd}
\begin{center}\begin{tabular}{|l|l|}
  \hline
  Weight $i$   & Multiplicity $A_i$            \\
  \hline
   0         &  1        \\
  \hline
   $(p-1)(p^{m-2}-p^{\frac{m-3}{2}})$   &$\frac{p-1}{2}(p^{m-1}+p^{\frac{m-1}{2}}) $   \\
  \hline
   $(p-1)p^{m-2}$   &$p^{m-1}-1 $   \\
  \hline
  $(p-1)(p^{m-2}+p^{\frac{m-3}{2}})$   &$\frac{p-1}{2}(p^{m-1}-p^{\frac{m-1}{2}}) $   \\
    \hline
  \end{tabular}
\end{center}
\end{table}

(B) If $m\geqslant 2$ is even, then $C_D$ has the weight distribution given in Table \ref{wdtable:Cd m even}, where $A_i= 0$ for all other weights $i$ not listed in the table.
\begin{table}[htbp]
\tabcolsep 1mm \caption{The weight distribution of $C_D$ for the case of even $m$}\label{wdtable:Cd m even}
\begin{center}\begin{tabular}{|l|l|}
  \hline
  Weight $i$   & Multiplicity $A_i$            \\
  \hline
   0 \,        &  1        \\
  \hline
   $(p-1)p^{m-2}$   &$ p^{m-1}-(-1)^{\frac{m}{2}(\frac{p-1}{2})^2}(p-1)p^{\frac{m-2}{2}}-1 $   \\
  \hline
  $(p-1)(p^{m-2}-(-1)^{\frac{m}{2}(\frac{p-1}{2})^2}p^{\frac{m-2}{2}})$   &$(p-1)(p^{m-1}+(-1)^{\frac{m}{2}(\frac{p-1}{2})^2}p^{\frac{m-2}{2}})$   \\
    \hline
  \end{tabular}
\end{center}
\end{table}
\end{corollary}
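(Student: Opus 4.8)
\textbf{Proof strategy for Theorem~\ref{thcwe:CD}.} Write $q=p^m$ and, for $a\in\mathbb{F}_q$ and $\rho\in\mathbb{F}_p$, put $N(a,\rho)=\#\{x\in D:\mathrm{Tr}(ax)=\rho\}$, so the codeword attached to $a$ contributes $\prod_{\rho=0}^{p-1}w_\rho^{N(a,\rho)}$ to the complete weight enumerator. Since $D$ spans $\mathbb{F}_q$ over $\mathbb{F}_p$ (as is readily checked), the map $a\mapsto(\mathrm{Tr}(ad))_{d\in D}$ is injective, so $\dim C_D=m$ and
$$\mathrm{CWE}(C_D)=\sum_{a\in\mathbb{F}_q}\ \prod_{\rho=0}^{p-1}w_\rho^{N(a,\rho)};$$
everything reduces to classifying the tuples $\bigl(N(a,\rho)\bigr)_\rho$ and counting the $a$ realising each one. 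Fix a primitive complex $p$th root of unity $\zeta_p$ and the canonical additive character $\chi(x)=\zeta_p^{\mathrm{Tr}(x)}$ of $\mathbb{F}_q$; orthogonality applied to the defining condition $\mathrm{Tr}(x^2)=0$ and to $\mathrm{Tr}(ax)=\rho$ gives
$$N(a,\rho)=\frac1{p^2}\sum_{y,z\in\mathbb{F}_p}\zeta_p^{-z\rho}\sum_{x\in\mathbb{F}_q^*}\chi\bigl(yx^2+zax\bigr).$$

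The core computation is the inner Weil sum. Peeling off $x=0$ and completing the square, for $y\neq0$ one has $\sum_{x\in\mathbb{F}_q}\chi(yx^2+\beta x)=\eta(y)\,G\,\chi(-\beta^2/(4y))$, with $\eta$ the quadratic multiplicative character of $\mathbb{F}_q$ and $G=\sum_{x}\eta(x)\chi(x)$; by the Davenport--Hasse relation $G=(-1)^{m-1}g^m$, where $g$ is the quadratic Gauss sum of $\mathbb{F}_p$ and $g^2=\eta_p(-1)p$ ($\eta_p$ the quadratic character of $\mathbb{F}_p$), so in particular $G=-(-1)^{\frac m2(\frac{p-1}{2})^2}p^{m/2}$ for even $m$. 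Substituting $\beta=za$, the $y=0$ and $z=0$ parts are elementary, and since $4y\in\mathbb{F}_p$ we have $\chi\bigl(-z^2a^2/(4y)\bigr)=\zeta_p^{-z^2\,\mathrm{Tr}(a^2)/(4y)}$; writing $T:=\mathrm{Tr}(a^2)$, the surviving $z$-sum is $p\,[\rho=0]$ when $T=0$ and a quadratic Gauss sum over $\mathbb{F}_p$ (again handled by completing the square) when $T\neq0$. The decisive structural fact is that $\eta|_{\mathbb{F}_p^*}$ is the trivial character for even $m$ and equals $\eta_p$ for odd $m$; this is exactly what separates the theorem into its two parts.

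Summing over $y\in\mathbb{F}_p^*$, the tuple $\bigl(N(a,\rho)\bigr)_\rho$ turns out to depend only on which of three classes $a$ lies in: $a=0$ (the all-zero codeword $w_0^{\,n}$, where $n=|D|=\#\{x\in\mathbb{F}_q^*:\mathrm{Tr}(x^2)=0\}$ is computed by the same character-sum count and gives the stated length); $a\in D$, i.e.\ $a\neq0$ with $T=0$; and $a\in\mathbb{F}_q^*\setminus D$. For odd $m$ this last class splits further according to $\eta_p(-T)=\pm1$, producing the two terms of part~(A) that carry $p^{\frac{m-3}{2}}$; for even $m$ the $\eta_p(-T)$-dependence cancels (the factor $g^2=\eta_p(-1)p$ absorbs the quadratic-character twist), so the whole class contributes a single codeword, the third term of part~(B). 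It remains to count the classes: $a=0$ and $a\in D$ are settled by $|D|$, the set $\mathbb{F}_q^*\setminus D$ has size $(p^m-1)-|D|$ (part~(B)), and for part~(A) one computes $\#\{a\in\mathbb{F}_q:\mathrm{Tr}(a^2)=t\}=p^{m-1}+\tfrac1p\,Gg\,\eta_p(-t)=p^{m-1}\pm p^{(m-1)/2}$ for $t\neq0$ and splits over the $\tfrac{p-1}{2}$ residues $t$ with $\eta_p(-t)$ of each sign; this delivers all stated multiplicities. Corollary~\ref{wd:CD} then follows at once, since the Hamming weight of a codeword is $n$ minus the exponent of $w_0$, so grouping the summands of $\mathrm{CWE}(C_D)$ by that exponent reproduces Tables~\ref{wdtable:Cd m odd} and~\ref{wdtable:Cd m even}.

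\textbf{Main obstacle.} The technical heart is the sign bookkeeping for the Gauss sums: one must carry the exact value of $G$ through the nested double character sum, pin down $\eta|_{\mathbb{F}_p^*}$ according to the parity of $m$, and — for even $m$ — recognise the cancellation that fuses two potential weight classes into one, so that the exponents compress into the compact form involving $(-1)^{\frac m2(\frac{p-1}{2})^2}$. The identity $\sum_\rho N(a,\rho)=n$ in each class is a convenient consistency check, but it is the signs where slips are easy to make.
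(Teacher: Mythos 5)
Your proposal is correct and follows essentially the same route as the paper: you establish the complete weight enumerator of Theorem~\ref{thcwe:CD} by orthogonality of characters, completing the square to reduce to Gauss sums, using the parity-of-$m$ behaviour of $\eta$ restricted to $\mathbb{F}_p^*$ (the paper's Lemma~\ref{le:eta}), and counting the classes of $a$ via the quadratic-form solution count (Lemmas~\ref{lm: solution of quadra form} and~\ref{lem:ti}); the corollary is then read off by taking the weight to be $n$ minus the exponent of $w_0$, exactly as the paper intends. Your sign bookkeeping (e.g.\ $\tfrac1p Gg\,\eta_p(-t)=(-1)^{\frac{m-1}{2}\frac{p-1}{2}}p^{\frac{m-1}{2}}\eta_p(t)$) checks out against the paper's identities.
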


\begin{example}
(i) Let $(p,m)=(3,5)$ and $d=2$. Then the code $C_D$ has parameters $[80,5,48]$ with complete
weight enumerator
$$w_0^{80}+90w_0^{32}w_1^{24}w_2^{24}+80w_0^{26}w_1^{27}w_2^{27}+72w_0^{20}w_1^{30}w_2^{30},$$
and weight enumerator
$$1+90z^{48}+80z^{54}+72z^{60}.$$

(ii) Let $(p,m)=(5,4)$ and $d=5$. Then the code $C_D$ has parameters $[104,4,80]$ with complete
weight enumerator
$$w_0^{104}+520w_0^{24}w_1^{20}w_2^{20}w_3^{20}w_4^{20}+104w_0^{4}w_1^{25}w_2^{25}w_3^{25}w_4^{25},$$
and weight enumerator
$$1+520z^{80}+104z^{100}.$$

These results are consistent with numerical computation by Magma.

\end{example}

\begin{theorem}\label{thcwe:CDb}
Let $D$ and $C_{D,b}$ be defined as above.

(A) If $m \geqslant 3$  is odd, then the code
$C_{D,b}$ is a $[p^{m-1}-1,m+1]$ code over $\mathbb{F}_{p}$ with the
complete weight enumerator
\begin{eqnarray*}
    &&\mathrm{CWE}(C_{D,b})\\
    &&=\sum_{b=0}^{p-1}w_{b}^{p^{m-1}-1}+(p^{m-1}-1)
    \sum_{b=0}^{p-1}w_{b}^{p^{m-2}-1}\prod_{\scriptstyle\rho=0 \atop \scriptstyle \rho\neq b}^{p-1}w_{\rho}^{p^{m-2}}+\\
&&~~\frac{p\!-\!1}{2}\left(p^{m\!-\!1}\!+\!p^{\frac{m\!-\!1}{2}}\right)
\sum_{b=0}^{p-1}w_{b}^{p^{m-2}\!-\!1\!+\!(p-1)p^{\frac{m-3}{2}}}\prod_{\scriptstyle\rho=0 \atop \scriptstyle \rho\neq b}^{p-1}w_{\rho}^{p^{m-2}-p^{\frac{m-3}{2}}}+ \\
&&~~\frac{p\!-\!1}{2}\left(p^{m\!-\!1}\!-\!p^{\frac{m\!-\!1}{2}}\right)
\sum_{b=0}^{p-1}w_{b}^{p^{m-2}\!-\!1\!-\!(p-1)p^{\frac{m-3}{2}}}\prod_{\scriptstyle\rho=0 \atop \scriptstyle \rho\neq b}^{p-1}w_{\rho}^{p^{m-2}+p^{\frac{m-3}{2}}}. \\
\end{eqnarray*}

(B) If $m\geqslant2$ is even, then the code
$C_{D,b}$ over $\mathbb{F}_{p}$ has parameters
$$[ p^{m-1}-1-(-1)^{\frac{m}{ 2}(\frac{p-1}{2})^2}(p-1)p^{\frac{m-2}{2}},m+1]$$ and the
complete weight enumerator
\begin{eqnarray*}
   && \mathrm{C}\mathrm{WE}(C_{D,b})\\
&& =\sum_{b=0}^{p-1}w_{b}^{p^{m-1}-1-(-1)^{\frac{m}{2}(\frac{p-1}{2})^2}(p-1)p^{\frac{m-2}{2}}}+\\
&&~~\!\left(\!p^{m\!-\!1}\!\!\!-\!\!1\!-\!(\!-1\!)^{\frac{\!m}{2}(\frac{p\!-\!1}{2})^2}(\!p\!-\!1\!)p^{\frac{m\!-\!2}{2}}\!\right)
\sum_{b=0}^{p-1}w_b^{p^{m\!-\!2}\!-\!1\!-(\!-1\!)^{\frac{\!m}{2}(\frac{p\!-\!1}{2})^2}(\!p\!-\!1\!)p^{\frac{\!m\!-\!2}{2}}}
\prod_{\scriptstyle\rho=0 \atop \scriptstyle \rho\neq b}^{p-1}w_{\rho}^{p^{\!m\!-\!2}}+\\
&&~~(p\!-\!1)\left(\!p^{m\!-\!1}\!+\!(\!-1\!)^{\frac{m}{2}(\frac{p\!-\!1}{2})^2}p^{\frac{m\!-\!2}{2}}\right)
\sum_{b=0}^{p-1}w_b^{p^{m\!-\!2}\!-\!1}\prod_{\scriptstyle\rho=0 \atop \scriptstyle \rho\neq b}^{p-1}w_{\rho}^{p^{m\!-\!2}\!-\!(\!-1\!)^{\frac{m}{2}(\frac{p\!-\!1}{2})^2}p^{\frac{m\!-\!2}{2}}}.\\
\end{eqnarray*}
\end{theorem}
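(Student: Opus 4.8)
The observation underlying the whole argument is that $C_{D,b}$ is the code obtained from $C_D$ by adjoining the all-one vector $\mathbf{1}=(1,1,\dots,1)$: its codewords are exactly the vectors $\mathsf{c}_a+b\mathbf{1}$ with $\mathsf{c}_a=(\mathrm{Tr}(ad_1),\dots,\mathrm{Tr}(ad_n))$, $a\in\mathbb{F}_{p^m}$ and $b\in\mathbb{F}_p$. The plan is therefore to reduce Theorem~\ref{thcwe:CDb} to Theorem~\ref{thcwe:CD} in three steps: check that $(a,b)\mapsto\mathsf{c}_a+b\mathbf{1}$ is a bijection onto $C_{D,b}$ (so that $\dim C_{D,b}=m+1$, and the lengths coincide with those of $C_D$ since adjoining $\mathbf{1}$ does not change $n$); observe that translating a word by $b\mathbf{1}$ only permutes the multiplicities of its coordinate symbols; and then sum the resulting substituted copies of $\mathrm{CWE}(C_D)$.

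For the bijectivity I would show that $C_D$ contains no nonzero constant vector. If $\mathsf{c}_a+b\mathbf{1}=\mathsf{c}_{a'}+b'\mathbf{1}$, then $\mathrm{Tr}((a-a')x)=b'-b$ for every $x\in D$. Because $\mathrm{Tr}((\lambda x)^2)=\lambda^{2}\mathrm{Tr}(x^2)$ for $\lambda\in\mathbb{F}_p^{*}$, the set $D$ is stable under $x\mapsto-x$; comparing the value of $\mathrm{Tr}((a-a')x)$ at $x$ and at $-x$ gives $b'-b=-(b'-b)$, hence $b=b'$ as $p$ is odd. Then $\mathrm{Tr}((a-a')x)=0$ on all of $D$; but $\dim C_D=m$ by Theorem~\ref{thcwe:CD}, so $D$ spans $\mathbb{F}_{p^m}$ over $\mathbb{F}_p$, forcing $a=a'$. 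Hence $|C_{D,b}|=p^{m+1}$.

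Now if $\mathsf{c}_a$ contains the symbol $\rho\in\mathbb{F}_p$ with multiplicity $k_\rho$, then $\mathsf{c}_a+b\mathbf{1}$ contains $\rho+b$ with multiplicity $k_\rho$, so $w[\mathsf{c}_a+b\mathbf{1}]$ is obtained from $w[\mathsf{c}_a]$ by the substitution $w_\rho\mapsto w_{\rho+b}$ (subscripts taken mod $p$). Summing over all $a\in\mathbb{F}_{p^m}$ and all $b\in\mathbb{F}_p$ and using the bijection from the previous step,
\begin{equation*}
\mathrm{CWE}(C_{D,b})=\sum_{b=0}^{p-1}\bigl(\mathrm{CWE}(C_D)\bigr)\big|_{w_\rho\mapsto w_{\rho+b}} .
\end{equation*}
Feeding in the two formulas of Theorem~\ref{thcwe:CD}, a term $w_0^{N}$ turns into $\sum_{b=0}^{p-1}w_b^{N}$ and a term $c\,w_0^{e_0}\prod_{\rho=1}^{p-1}w_\rho^{e}$ turns into $c\sum_{b=0}^{p-1}w_b^{e_0}\prod_{\rho\neq b}w_\rho^{e}$; doing this for the four summands when $m$ is odd (from Theorem~\ref{thcwe:CD}(A)) and for the three summands when $m$ is even (from Theorem~\ref{thcwe:CD}(B)) yields precisely the two displayed complete weight enumerators.

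\textbf{Main obstacle.} Granting Theorem~\ref{thcwe:CD}, there is no serious difficulty; the one place that needs an argument is the injectivity of $(a,b)\mapsto\mathsf{c}_a+b\mathbf{1}$, i.e.\ that $C_D$ has no nonzero constant codeword, which is settled above using the $\mathbb{F}_p^{*}$-stability of the defining set. (If instead one wanted a self-contained proof, the real work would be the direct evaluation of $\#\{x\in\mathbb{F}_{p^m}^{*}:\mathrm{Tr}(x^2)=0,\ \mathrm{Tr}(ax)=\rho-b\}$ by additive characters and quadratic Gauss sums in the spirit of~\cite{ding2015twothree}, where the parity of $m$ produces the two cases.)
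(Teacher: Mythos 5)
Your proposal is correct and takes essentially the same route as the paper: the paper likewise splits into the case $a=0$ (contributing $\sum_{b}w_b^{n}$) and $a\neq 0$ (where the $C_{D,b}$-codeword is the $C_D$-codeword with symbol subscripts shifted by $b$, i.e.\ $n_{a,b}(\rho+b)=n_a(\rho)$), and then substitutes the formulas of Theorem~\ref{thcwe:CD}. Your explicit injectivity argument for $(a,b)\mapsto\mathsf{c}_a+b\mathbf{1}$, via $D=-D$ and $\dim C_D=m$, is a detail the paper leaves implicit but which is genuinely needed to justify the dimension $m+1$.
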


\begin{corollary}\label{wd:CDb} With notation as above.

(A) If $m \geqslant 3$ is odd, then $C_{D,b}$ has the
weight distribution given in Table \ref{wdtable:Cdb m odd}, where $A_i = 0$ for all other
weights $i$ not listed in the table.
\begin{table}[htbp]
\tabcolsep 1mm \caption{The weight distribution of $C_{D,b}$ for the case of odd $m$}\label{wdtable:Cdb m odd}
\begin{center}\begin{tabular}{|l|l|}
  \hline
  Weight $i$                                 & Multiplicity $A_i$            \\
  \hline
   0                                          &  1        \\
  \hline
  $p^{m-1}-1 $                                &$p-1$\\
  \hline
   $(p-1)p^{m-2}$                             &$p^{m-1}-1$\\
  \hline
   $(p-1)p^{m-2}-1 $                          &$(p-1)(p^{m-1}-1)$\\
  \hline
  $(p-1)(p^{m-2}-p^{\frac{m-3}{2}})$          & $\frac{p-1}{2}(p^{m-1}+p^{\frac{m-1}{2}})$\\
  \hline
  $(p-1)p^{m-2}+p^{\frac{m-3}{2}}-1$          &$\frac{(p-1)^2}{2}(p^{m-1}+p^{\frac{m-1}{2}})$\\
  \hline
  $(p-1)(p^{m-2}+p^{\frac{m-3}{2}})$          &$\frac{p-1}{2}(p^{m-1}-p^{\frac{m-1}{2}})$\\
  \hline
  $(p-1)p^{m-2}-p^{\frac{m-3}{2}}-1 $         &$\frac{(p-1)^2}{2}(p^{m-1}-p^{\frac{m-1}{2}})$\\
  \hline
  \end{tabular}
\end{center}
\end{table}

(B) If $m\geqslant 2$ is even, then $C_{D,b}$ has the weight distribution given in Table \ref{wdtable:Cdb m even}, where $A_i= 0$ for all other weights $i$ not listed in the table.
\begin{table}[htbp]
\tabcolsep 1mm \caption{The weight distribution of $C_{D,b}$ for the case of even $m$}\label{wdtable:Cdb m even}
\begin{center}\begin{tabular}{|l|l|}
  \hline
  Weight $i$   & Multiplicity $A_i$            \\
  \hline
   0 \,        &  1        \\
  \hline
  $p^{m-1}-1-(-1)^{\frac{m}{2}(\frac{p-1}{2})^2}(p-1)p^{\frac{m-2}{2}}$
  &$p-1$\\
  \hline
   $(p-1)p^{m-2}$
   &$p^{m-1}-1-(-1)^{\frac{m}{2}(\frac{p-1}{2})^2}(p-1)p^{\frac{m-2}{2}}$\\
  \hline
  $(p-1)(p^{m-2}-(-1)^{\frac{m}{2}(\frac{p-1}{2})^2}p^{\frac{m-2}{2}})-1$ &$(p\!-\!1)(p^{m\!-\!1}\!-\!1\!-\!(-1)^{\frac{m}{2}(\frac{p\!-\!1}{2})^2}(p\!-\!1)p^{\frac{m-2}{2}})$\\
  \hline
  $(p-1)(p^{m-2}-(-1)^{\frac{m}{2}(\frac{p-1}{2})^2}p^{\frac{m-2}{2}})$
  &$(p-1)(p^{m-1}+(-1)^{\frac{m}{2}(\frac{p-1}{2})^2}p^{\frac{m-2}{2}})$\\
  \hline
  $(p-1)p^{m-2}\!-\!(p\!-\!2)(-1)^{\frac{m}{2}(\frac{p\!-\!1}{2})^2}p^{\frac{m\!-\!2}{2}}\!-\!1$
  &$(p-1)^2(p^{m-1}+(-1)^{\frac{m}{2}(\frac{p-1}{2})^2}p^{\frac{m-2}{2}})$\\
  \hline
  \end{tabular}
\end{center}
\end{table}
\end{corollary}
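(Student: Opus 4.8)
The plan is to read Corollary~\ref{wd:CDb} off the complete weight enumerator of Theorem~\ref{thcwe:CDb}, and to obtain that enumerator from Theorem~\ref{thcwe:CD} by using that adding a fixed constant $b$ to every coordinate of a codeword only relabels the field symbols. The first step is to fix the parameters of $C_{D,b}$. Linearity and the length $n=|D|$ are immediate from~\eqref{def:CDb}, with $n=p^{m-1}-1$ for odd $m$ and $n=p^{m-1}-1-(-1)^{\frac{m}{2}(\frac{p-1}{2})^2}(p-1)p^{\frac{m-2}{2}}$ for even $m$ by Theorem~\ref{thcwe:CD}. For the dimension it suffices to prove that $(a,b)\mapsto(\mathrm{Tr}(ad_1)+b,\dots,\mathrm{Tr}(ad_n)+b)$ is injective on $\mathbb{F}_{p^m}\times\mathbb{F}_p$. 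If two pairs $(a,b)$ and $(a',b')$ produce the same codeword, then $\mathrm{Tr}((a-a')d_i)=b'-b$ for all $i$; since $x\in D$ implies $-x\in D$ (because $(-x)^2=x^2$), replacing each $d_i$ by $-d_i$ gives $b'-b=-(b'-b)$, hence $b=b'$ as $p$ is odd, and then $\mathrm{Tr}((a-a')d_i)=0$ for all $i$ forces $a=a'$ because $\dim C_D=m$ by Theorem~\ref{thcwe:CD}. Thus $C_{D,b}$ is an $[n,m+1]$ code.

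The second step is the substitution identity. For $b\in\mathbb{F}_p$ let $\tau_b$ be the substitution $w_\rho\mapsto w_{\rho+b}$ (subscripts modulo $p$) acting on polynomials in $w_0,\dots,w_{p-1}$. Writing $\mathsf{c}_a=(\mathrm{Tr}(ad_1),\dots,\mathrm{Tr}(ad_n))$ for the generic codeword of $C_D$, the codeword of $C_{D,b}$ attached to $(a,b)$ is $\mathsf{c}_a+(b,\dots,b)$; a coordinate of $\mathsf{c}_a$ equal to $w_\rho$ becomes one equal to $w_{\rho+b}$, so $w[\mathsf{c}_a+(b,\dots,b)]=\tau_b\bigl(w[\mathsf{c}_a]\bigr)$. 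Summing over $a\in\mathbb{F}_{p^m}$ and $b\in\mathbb{F}_p$ and using that $\tau_b$ respects finite sums,
\[
  \mathrm{CWE}(C_{D,b})=\sum_{b=0}^{p-1}\tau_b\bigl(\mathrm{CWE}(C_D)\bigr).
\]
I would then substitute the two formulas for $\mathrm{CWE}(C_D)$ from Theorem~\ref{thcwe:CD}, apply $\tau_b$ to every term, rewrite each shifted product via $\prod_{\rho=1}^{p-1}w_{\rho+b}^{e}=\prod_{\rho=0,\ \rho\neq b}^{p-1}w_\rho^{e}$, and sum over $b$; term by term this is exactly the displayed enumerator in parts (A) and (B) of Theorem~\ref{thcwe:CDb}, since in the $b$-summand the former exponent of $w_0$ moves onto $w_b$ and all other exponents stay put.

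The third step extracts Corollary~\ref{wd:CDb}. In any complete weight enumerator a monomial $\prod_{\rho}w_\rho^{k_\rho}$ carrying coefficient $A$ accounts for $A$ codewords with $k_0$ zero coordinates, hence Hamming weight $n-k_0$. So for each block of $\mathrm{CWE}(C_{D,b})$ I would break the inner sum $\sum_{b=0}^{p-1}$ into the single term $b=0$, whose monomial $w_0^{\alpha}\prod_{\rho\neq0}w_\rho^{\beta}$ has $w_0$-exponent $\alpha$, and the $p-1$ terms with $b\neq0$, each monomial $w_b^{\alpha}\prod_{\rho\neq b}w_\rho^{\beta}$ having $w_0$-exponent $\beta$ (for the leading block one has $\beta=0$): the former yields weight $n-\alpha$ with multiplicity the block coefficient, the latter weight $n-\beta$ with $(p-1)$ times that coefficient. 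Running through the four blocks in the odd case and the three in the even case and doing the elementary arithmetic yields precisely the rows of Tables~\ref{wdtable:Cdb m odd} and~\ref{wdtable:Cdb m even}. I anticipate no genuine obstacle: once Theorem~\ref{thcwe:CD} is in hand, Theorem~\ref{thcwe:CDb} and Corollary~\ref{wd:CDb} are forced, and the only delicate point is the bookkeeping — keeping the subscript shifts modulo $p$ consistent and separating the $b=0$ contribution from the $b\neq0$ ones within each block.
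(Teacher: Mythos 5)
Your proposal is correct and follows essentially the same route as the paper: the paper proves Theorem~\ref{thcwe:CDb} by exactly your shift argument (its Case~1 for $a=0$ and Case~2 using $n_{a,b}(\rho+b)=n_a(\rho)$, which is your substitution $\tau_b$), and then obtains Corollary~\ref{wd:CDb} by reading each Hamming weight off as $n$ minus the $w_0$-exponent of the corresponding monomial, splitting the $b=0$ term from the $p-1$ terms with $b\neq0$ in each block. Your explicit injectivity check for the dimension $m+1$ is a small addition the paper leaves implicit, but it does not change the argument.
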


\begin{example}
(i) Let $(p,m)=(3,5)$ and $d=2$. Then the code $C_{D,b}$ has parameters $[80,6,48]$ with complete
weight enumerator
\begin{eqnarray*}
  &w_0^{80}&+90w_0^{32}w_1^{24}w_2^{24}+72w_0^{30}w_1^{30}w_2^{20}+72w_0^{30}w_1^{20}w_2^{30}
+80w_0^{27}w_1^{27}w_2^{26}\\
  &&+80w_0^{27}w_1^{26}w_2^{27}+80w_0^{26}w_1^{27}w_2^{27}
  +90w_0^{24}w_1^{32}w_2^{24}+90w_0^{24}w_1^{24}w_2^{32}\\
  &&+72w_0^{20}w_1^{30}w_2^{30}+w_1^{80}+w_2^{80},
\end{eqnarray*}

and weight enumerator
$$1+90z^{48}+144z^{50}+160z^{53}+80z^{54}+180z^{56}+72z^{60}+2z^{80}.$$

(ii) Let $(p,m)=(3,4)$ and $d=3$. Then the code $C_{D,b}$ has parameters $[20,5,11]$ with complete
weight enumerator
\begin{eqnarray*}
&w_0^{20}&+20w_0^{9}w_1^{9}w_2^{2}+20w_0^{9}w_1^{2}w_2^{9}+60w_0^{8}w_1^{6}w_2^{6}
  +60w_0^{6}w_1^{8}w_2^{6}\\
  &&+60w_0^{6}w_1^{6}w_2^{8}+20w_0^{2}w_1^{9}w_2^{9}+w_1^{20}+w_2^{20},
  \end{eqnarray*}
and weight enumerator
$$1+40z^{11}+60z^{12}+120z^{14}+20z^{18}+2z^{20}.$$

We can check these results by using Magma.
\end{example}

The remainder of this paper is organized as follows. In Section
\ref{sec:mathtool}, we recall some definitions and results on
quadratic forms and Gauss sums over finite fields. Section \ref{sec:proof}
is devoted to the proofs of Theorems \ref{thcwe:CD} and \ref{thcwe:CDb}, respectively.
Section \ref{sec:conclusion} concludes this paper and makes some
remarks on this topic.

\section{Mathematical foundations}\label{sec:mathtool}

We start with quadratic forms over
 finite fields. Let $q$ be a
power of $p$ and $t$ be a positive integer. By identifying
the finite field $\mathbb{F}_{q^t}$ with a $t$-dimensional vector
space $\mathbb{F}^t_{q}$ over $\mathbb{F}_{q}$, a function $f(x)$
from $\mathbb{F}_{q^t}$ to $\mathbb{F}_{q}$ can be regarded as a
$t$-variable polynomial over $\mathbb{F}_{q}$. The function $f(x)$ is called
a quadratic form if it can be written as a
homogeneous polynomial of degree two on  $\mathbb{F}^t_{q}$ as
follows:
$$f(x_1,x_2,\cdots,x_t)=\sum_{1\leqslant i \leqslant j\leqslant t}a_{ij}x_ix_j,~~a_{ij}\in \mathbb{F}_{q}.$$
Here we fix a basis of  $\mathbb{F}^t_{q}$ over  $\mathbb{F}_{q}$
and identify each $x\in \mathbb{F}_{q^t}$ with a vector
$(x_1,x_2,\cdots,x_t)\in\mathbb{F}^t_{q}$.
 The rank of the quadratic form $f(x)$, rank$(f)$, is defined as the
codimension of the $\mathbb{F}_{q}$-vector space
$$W=\{x\in \mathbb{F}_{q^t}|f(x+z)-f(x)-f(z)=0, ~~for~~all~~z\in \mathbb{F}_{q^t}\}.$$
Then $|W|=q^{t-\mathrm{rank}(f)}$.

For a quadratic form $f(x)$ with $t$ variables over $\mathbb{F}_q$,
there exists a symmetric matrix $A$ of order $t$ over $\mathbb{F}_q$
such that $f(x)=XAX'$, where $X=(x_1,x_2,\cdots,x_t)\in
\mathbb{F}^t_q$ and $X'$ denotes the transpose of $X$. It is known
that there exists a nonsingular matrix $B$ over $\mathbb{F}_q$ such
that $BAB'$ is a diagonal matrix. Making a nonsingular linear
substitution $X=YB$ with $Y=(y_1,y_2,\cdots,y_t)\in \mathbb{F}^t_q$,
we have
$$f(x)=Y(BAB')Y'=\sum^r_{i=1}a_iy^2_i,~~~a_i\in \mathbb{F}^*_q,$$
where $r$ is the rank of $f(x)$. The determinant $\mathrm{det}(f)$
of $f(x)$ is defined to be the determinant of $A$, and $f(x)$ is said to be
nondegenerate if $\mathrm{det}(f)\neq0$.
\begin{lemma}(See Theorems 6.26 and 6.27 of \cite{lidl1983finite})\label{lm: solution of quadra form}
Let $f$ be a nondegenerate quadratic form over $\mathbb{F}_q$,
$q=p^t$ for odd prime $p$, in $l$ variables. Define a function
$\upsilon(\cdot)$ over $\mathbb{F}_q$ by $\upsilon(0)=q-1$ and
$\upsilon(\rho)=-1$ for $\rho\in\mathbb{F}^*_q$. Then for
$b\in\mathbb{F}_q$ the number of solutions of the equation
$f(x_1,\cdots,x_l)=b$ is
\begin{eqnarray*}
\left\{\begin{array}{lll}q^{l-1}+\upsilon(b)q^{\frac{l-2}{2}}\eta_t\left((-1)^\frac{l}{2}\mathrm{det}(f)\right),
&&if~~ l~~ is~~ even,\\
q^{l-1}+q^{\frac{l-1}{2}}\eta_t\left((-1)^\frac{l-1}{2}b~
\mathrm{det}(f)\right), &&if~~ l~~ is
~~odd,\\
\end{array}
\right.
\end{eqnarray*}
where $\eta_t$ is the quadratic character of $\mathbb{F}_q$ defined by
\begin{eqnarray*}
\eta_t(x)=\left\{\begin{array}{lll}1,
&&if ~~x ~~is~~ a~~ square~~ in~~ \mathbb{F}_{p^t}^*,\\
-1,&&if ~~x ~~is~~ a~~ nonsquare~~ in~~ \mathbb{F}_{p^t}^*,\\
0,&&if ~~x=0.\\
\end{array}
\right.
\end{eqnarray*}
\end{lemma}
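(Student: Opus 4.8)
The plan is to reduce the count to Gaussian exponential sums. Since $p$ is odd, any nondegenerate quadratic form is diagonalizable: as recalled above, there is a nonsingular substitution $X=YB$ carrying $f$ into $\sum_{i=1}^l a_i y_i^2$ with every $a_i\in\mathbb{F}_q^*$. This substitution is a bijection of $\mathbb{F}_q^l$, so it preserves the number $N(b)$ of solutions of $f=b$, and it changes $\det(f)$ only by the square $\det(B)^2$, so that the quantity $\eta_t(\det(f))=\eta_t(a_1\cdots a_l)$ appearing in the formula is unchanged. Hence I may assume $f$ is already diagonal. Letting $\psi$ denote the canonical additive character of $\mathbb{F}_q$, orthogonality gives
\[
N(b)=\frac1q\sum_{c\in\mathbb{F}_q}\sum_{y\in\mathbb{F}_q^l}\psi\!\left(c(f(y)-b)\right)=q^{l-1}+\frac1q\sum_{c\in\mathbb{F}_q^*}\psi(-cb)\sum_{y\in\mathbb{F}_q^l}\psi(cf(y)),
\]
the term $c=0$ contributing exactly $q^{l-1}$.

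Because $f$ is diagonal, the inner sum factors as $\prod_{i=1}^{l}\sum_{y_i\in\mathbb{F}_q}\psi(ca_iy_i^2)$. I would evaluate each one-variable factor by the standard identity $\sum_{y\in\mathbb{F}_q}\psi(ay^2)=\eta_t(a)G$ for $a\in\mathbb{F}_q^*$, obtained by counting the $1+\eta_t(u)$ square roots of each $u$; here $G=\sum_{v\in\mathbb{F}_q^*}\eta_t(v)\psi(v)$ is the quadratic Gauss sum. This yields
\[
\sum_{y\in\mathbb{F}_q^l}\psi(cf(y))=\eta_t(c)^l\,\eta_t(a_1\cdots a_l)\,G^l=\eta_t(c)^l\,\eta_t(\det(f))\,G^l,
\]
so the whole problem reduces to the known Gauss-sum square $G^2=\eta_t(-1)q$ together with bookkeeping of the parity-dependent factors $\eta_t(c)^l$ and $G^l$.

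It then remains to split on the parity of $l$. For $l$ even, $\eta_t(c)^l=1$ and $G^l=(\eta_t(-1)q)^{l/2}=\eta_t((-1)^{l/2})q^{l/2}$, so the inner sum is the constant $\eta_t((-1)^{l/2}\det(f))q^{l/2}$; the leftover factor $\sum_{c\in\mathbb{F}_q^*}\psi(-cb)$ equals $q-1$ if $b=0$ and $-1$ otherwise, that is, exactly $\upsilon(b)$, which produces the even-case formula. For $l$ odd, $\eta_t(c)^l=\eta_t(c)$ and $G^l=\eta_t((-1)^{(l-1)/2})q^{(l-1)/2}G$, leaving the character sum $\sum_{c\in\mathbb{F}_q^*}\eta_t(c)\psi(-cb)$; this vanishes when $b=0$ (matching $\eta_t(0)=0$), and for $b\neq0$ the substitution $c\mapsto -c/b$ turns it into $\eta_t(-b)G$. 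Substituting and using $G^2=\eta_t(-1)q$ and $\eta_t(-1)^2=1$ then collapses the expression to $q^{(l-1)/2}\eta_t((-1)^{(l-1)/2}b\,\det(f))$, the odd-case formula.

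The assembly is routine; the real content sits in the two analytic inputs I am invoking, namely the one-variable evaluation $\sum_{y}\psi(ay^2)=\eta_t(a)G$ and the Gauss-sum square $G^2=\eta_t(-1)q$. I expect the finicky part to be the sign and parity bookkeeping, keeping $\eta_t(c)^l$, $G^l$, and the factor $(-1)^{\lfloor l/2\rfloor}$ consistent, rather than any genuine difficulty, since both analytic facts are classical over $\mathbb{F}_q=\mathbb{F}_{p^t}$.
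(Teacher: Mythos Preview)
Your argument is correct and is in fact the standard character-sum proof found in Lidl--Niederreiter; the paper itself does not supply a proof but merely cites Theorems~6.26 and~6.27 of \cite{lidl1983finite}, so there is nothing to compare against. The only point worth a second look is the odd-$l$ simplification: after substituting you pick up $\eta_t(-b)G$, and multiplying by the remaining $G$ gives $\eta_t(-b)\eta_t(-1)q=\eta_t(b)q$, which cleanly absorbs into $\eta_t\!\big((-1)^{(l-1)/2}b\,\det(f)\big)$ as you wrote; the bookkeeping is right.
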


The canonical additive character of $\mathbb{F}_{p^m}$, denoted $\chi$, is given by
\begin{eqnarray*}
\chi(x)=\zeta_p^{\mathrm{Tr}(x)}
\end{eqnarray*}for all $x\in \mathbb{F}_{p^m}$, where $\zeta_p=e^{2\pi\sqrt{-1}/p}$ and $\mathrm{Tr}$ is a trace function from
$\mathbb{F}_{p^m}$ to $\mathbb{F}_{p}$ defined by
$$\mathrm{Tr}(x)=\sum^{m-1}_{i=0}x^{p^i},~~x\in
\mathbb{F}_{p^m}.$$

In what follows, we abbreviate $\eta_m$ as $\eta$ for simplicity. The quadratic Gauss sum $G(\eta,\chi)$ over $\mathbb{F}_{p^m}$ is defined by
\begin{eqnarray*}
G(\eta,\chi)=\sum_{x\in\mathbb{F}_{p^m}^*}\eta(x)\chi(x)=\sum_{x\in\mathbb{F}_{p^m}}\eta(x)\chi(x),
\end{eqnarray*}
and the quadratic Gauss sum $G(\bar{\eta},\bar{\chi})$ over $\mathbb{F}_{p}$ is defined by
\begin{eqnarray*}
G(\bar{\eta},\bar{\chi})=\sum_{x\in\mathbb{F}_{p}^*}\bar{\eta}(x)\bar{\chi}(x)=\sum_{x\in\mathbb{F}_{p}}\bar{\eta}(x)\bar{\chi}(x),
\end{eqnarray*}
where $\bar{\eta}$ and $\bar{\chi}$ are the quadratic and canonical additive characters of $\mathbb{F}_{p}$, respectively.

The lemmas introduced below will play an important role in the
sequel.

\begin{lemma}(See Theorems 5.15 \cite{lidl1983finite})\label{lm:gauss sum}
With the symbols and notation above, we have
\begin{eqnarray}\label{eq:Gausspm}
G(\eta,\chi)=(-1)^{m-1}(\sqrt{-1})^{\frac{(p-1)^2}{4}m}p^{\frac{m}{2}},
\end{eqnarray}
and
\begin{eqnarray}\label{eq:Gaussp}
G(\bar{\eta},\bar{\chi})=(\sqrt{-1})^{\frac{(p-1)^2}{4}}p^{\frac{1}{2}}.
\end{eqnarray}

\end{lemma}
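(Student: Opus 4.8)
The plan is to take the prime-field evaluation \eqref{eq:Gaussp} as the core of the argument and to derive the extension-field evaluation \eqref{eq:Gausspm} from it by lifting. For the lifting step I would invoke the Davenport--Hasse relation (Theorem 5.14 of \cite{lidl1983finite}): if $\bar{\chi}$ and $\bar{\eta}$ are lifted from $\mathbb{F}_p$ to $\mathbb{F}_{p^m}$ by composing with the trace and with the norm $\mathrm{N}=\mathrm{N}_{\mathbb{F}_{p^m}/\mathbb{F}_p}$, then the resulting Gauss sum over $\mathbb{F}_{p^m}$ equals $(-1)^{m-1}G(\bar{\eta},\bar{\chi})^m$. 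Here $\bar{\chi}\circ\mathrm{Tr}=\chi$ by the very definition $\chi(x)=\zeta_p^{\mathrm{Tr}(x)}$, while $\bar{\eta}\circ\mathrm{N}=\eta$ because $\mathrm{N}(x)=x^{(p^m-1)/(p-1)}$ gives $\bar{\eta}(\mathrm{N}(x))=\mathrm{N}(x)^{(p-1)/2}=x^{(p^m-1)/2}=\eta(x)$. Hence $G(\eta,\chi)=(-1)^{m-1}G(\bar{\eta},\bar{\chi})^m$, and substituting \eqref{eq:Gaussp} gives $G(\eta,\chi)=(-1)^{m-1}\left((\sqrt{-1})^{(p-1)^2/4}\sqrt{p}\right)^m$, which is exactly \eqref{eq:Gausspm}. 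Thus everything reduces to \eqref{eq:Gaussp}.

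For \eqref{eq:Gaussp} the elementary half is to compute the square. Writing $G(\bar{\eta},\bar{\chi})^2$ as a double sum over $\mathbb{F}_p^*\times\mathbb{F}_p^*$ and substituting $y=xz$, the factor $\bar{\eta}(x)\bar{\eta}(xz)=\bar{\eta}(z)$ separates, and the inner sum $\sum_{x\in\mathbb{F}_p^*}\zeta_p^{x(1+z)}$ equals $p-1$ when $z=-1$ and $-1$ otherwise; together with $\sum_{z\in\mathbb{F}_p^*}\bar{\eta}(z)=0$ this yields $G(\bar{\eta},\bar{\chi})^2=\bar{\eta}(-1)\,p=(-1)^{(p-1)/2}p$. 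In particular $|G(\bar{\eta},\bar{\chi})|=\sqrt{p}$, and $G(\bar{\eta},\bar{\chi})$ is real for $p\equiv1\pmod4$ and purely imaginary for $p\equiv3\pmod4$ --- exactly the modulus and the $1$-versus-$\sqrt{-1}$ dichotomy encoded by $(\sqrt{-1})^{(p-1)^2/4}$ in \eqref{eq:Gaussp}.

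The genuine obstacle, and the delicate heart of the lemma, is fixing the sign: squaring has discarded precisely the information that distinguishes $+\sqrt{p}$ from $-\sqrt{p}$ (resp. $\pm\sqrt{-1}\,\sqrt{p}$), and this is the classical quadratic Gauss-sum sign theorem. I would resolve it by Schur's finite-Fourier argument. First note that $G(\bar{\eta},\bar{\chi})$ coincides with the plain sum $g=\sum_{k=0}^{p-1}\zeta_p^{k^2}$, since grouping the squares by residue gives $g=1+\sum_{t\in\mathbb{F}_p^*}(1+\bar{\eta}(t))\zeta_p^{t}=G(\bar{\eta},\bar{\chi})$. Now $g=\mathrm{tr}(F)$ for the Fourier matrix $F=(\zeta_p^{jk})_{0\le j,k\le p-1}$, and since $F^2=pP$ with $P$ the permutation $j\mapsto -j$, one has $F^4=p^2I$, so the eigenvalues of $F$ are $\sqrt{p}$ times fourth roots of unity. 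The crux is to pin down the four eigenvalue multiplicities; once they are known, reading off $\mathrm{tr}(F)$ returns $g=\sqrt{p}$ for $p\equiv1\pmod4$ and $g=\sqrt{-1}\,\sqrt{p}$ for $p\equiv3\pmod4$, which is \eqref{eq:Gaussp}. I expect this multiplicity count to be the main difficulty; should a purely algebraic determination prove awkward, the same sign follows analytically from the transformation law of the Jacobi theta function (equivalently, via Poisson summation and the Landsberg--Schaar relation). Combining the established \eqref{eq:Gaussp} with the Davenport--Hasse reduction of the first paragraph then completes the proof of both identities.
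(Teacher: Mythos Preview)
Your outline is sound: the Davenport--Hasse reduction from $\mathbb{F}_{p^m}$ to $\mathbb{F}_p$ is correctly set up (the norm-lift of $\bar\eta$ is indeed $\eta$ since $\bar\eta(\mathrm{N}(x))=x^{(p^m-1)/2}=\eta(x)$, and the trace-lift of $\bar\chi$ is $\chi$ by definition), the computation of $G(\bar\eta,\bar\chi)^2=\bar\eta(-1)p$ is the standard one, the identity $G(\bar\eta,\bar\chi)=\sum_{k=0}^{p-1}\zeta_p^{k^2}$ is verified correctly, and your identification of the sign as the genuinely hard step --- together with the two classical remedies you propose (Schur's DFT eigenvalue-multiplicity count, or the theta-transformation/Poisson-summation route) --- is accurate. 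Either route can be carried through to a complete proof.

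That said, the paper itself gives \emph{no} proof of this lemma: it is stated with the parenthetical ``See Theorems~5.15 \cite{lidl1983finite}'' and nothing more. So there is no argument in the paper to compare yours against; you have supplied a full proof sketch where the authors are content to cite one. If your aim is only to support the subsequent computations in the paper, a bare citation in the spirit of the original is enough; your write-up goes well beyond what the paper requires.
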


\begin{lemma}(See Theorem 5.33 of~\cite{lidl1983finite})\label{lm:expo sum}
With the symbols and notation above. Let $f(x)=a_2x^2+a_1x+a_0\in \mathbb{F}_{p^m}[x]$ with
$a_2\neq0$. Then
\begin{eqnarray*}\label{eq:expo sum}
\sum_{x\in
\mathbb{F}_{p^m}}\chi(f(x))=\chi(a_0-a_1^2(4a_2)^{-1})\eta(a_2)G(\eta,\chi).
\end{eqnarray*}
\end{lemma}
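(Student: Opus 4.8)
The plan is to evaluate the sum by the classical completing-the-square technique, reducing it to a pure quadratic Gauss sum. Since $p$ is odd, both $2$ and $4$ are units in $\mathbb{F}_{p^m}$, so I would first rewrite
$$f(x)=a_2\left(x+\frac{a_1}{2a_2}\right)^2+\left(a_0-\frac{a_1^2}{4a_2}\right).$$
Because $\chi$ is a homomorphism from the additive group of $\mathbb{F}_{p^m}$ to $\mathbb{C}^*$, the constant term $a_0-a_1^2(4a_2)^{-1}$ factors out of $\chi(f(x))$ as a multiplicative constant $\chi(a_0-a_1^2(4a_2)^{-1})$. Making the bijective substitution $y=x+a_1(2a_2)^{-1}$, which is a translation and hence a permutation of $\mathbb{F}_{p^m}$, the remaining sum becomes $\sum_{y\in\mathbb{F}_{p^m}}\chi(a_2y^2)$, so the whole problem collapses to evaluating this single Gaussian-type sum.

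For the second step I would rewrite $\sum_{y}\chi(a_2y^2)$ by grouping the terms according to the value $u=y^2$. For odd $p$, each nonzero $u$ is attained by exactly $1+\eta(u)$ elements $y$ (two when $u$ is a square, none otherwise), while $u=0$ is attained once; with the convention $\eta(0)=0$ this yields the uniform count $1+\eta(u)$ for every $u$. Hence
$$\sum_{y\in\mathbb{F}_{p^m}}\chi(a_2y^2)=\sum_{u\in\mathbb{F}_{p^m}}\chi(a_2u)+\sum_{u\in\mathbb{F}_{p^m}}\eta(u)\chi(a_2u).$$
The first sum vanishes by orthogonality, since $a_2\neq0$ forces $u\mapsto a_2u$ to permute $\mathbb{F}_{p^m}$ and $\chi$ is nontrivial. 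For the second sum I would substitute $v=a_2u$ and use $\eta(a_2^{-1})=\eta(a_2)$, which holds because $\eta$ takes values in $\{\pm1\}$, obtaining $\eta(a_2)\sum_v\eta(v)\chi(v)=\eta(a_2)G(\eta,\chi)$ straight from the definition of the Gauss sum.

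Combining the two steps gives
$$\sum_{x\in\mathbb{F}_{p^m}}\chi(f(x))=\chi\!\left(a_0-a_1^2(4a_2)^{-1}\right)\eta(a_2)G(\eta,\chi),$$
which is the claimed identity. There is no deep obstacle in this argument; the only points that demand care are the arithmetic ones hinging on $p$ being odd, namely the invertibility of $2a_2$ and $4a_2$ needed to complete the square, and the square-counting identity $\#\{y:y^2=u\}=1+\eta(u)$, which fails in characteristic two. Everything else reduces to orthogonality of additive characters and the defining formula for $G(\eta,\chi)$.
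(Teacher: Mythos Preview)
Your argument is correct and is the classical completing-the-square proof of this identity. Note that the paper does not supply its own proof of this lemma; it simply quotes it from Lidl--Niederreiter (Theorem~5.33), so there is nothing to compare against beyond observing that your derivation is exactly the standard one found in that reference.
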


\begin{lemma}(See Lemma 7 of~\cite{ding2015twothree})\label{le:eta}
If $m\geq 2$ is even, then $\eta(y) = 1$ for each $y \in \mathbb{F}_{p}^*$.
If $m$ is odd, then $\eta(y) = \bar{\eta}(y)$ for each $y \in
\mathbb{F}_{p}$.

\end{lemma}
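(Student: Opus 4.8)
The plan is to compute $\eta$ explicitly as a power map and then reduce the exponent modulo $p-1$, exploiting that every $y\in\mathbb{F}_p^*$ satisfies $y^{p-1}=1$. First I would record the standard description of the quadratic character on a cyclic group: fixing a generator $g$ of $\mathbb{F}_{p^m}^*$, the element $g^{(p^m-1)/2}$ is the unique element of order two, namely $-1$, so $\eta(g^k)=(-1)^k=g^{k(p^m-1)/2}$. Identifying $\pm1\in\mathbb{F}_{p^m}$ with $\pm1\in\mathbb{C}$, this gives $\eta(y)=y^{(p^m-1)/2}$ for every $y\in\mathbb{F}_{p^m}^*$, and likewise $\bar{\eta}(y)=y^{(p-1)/2}$ for $y\in\mathbb{F}_p^*$. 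The whole lemma then becomes a comparison of these two exponents on $\mathbb{F}_p^*$.

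Next I would factor $p^m-1=(p-1)N$ with $N=1+p+\cdots+p^{m-1}$ and observe that, since $p$ is odd, $N\equiv m\pmod 2$; thus $N$ is even exactly when $m$ is even. For $m$ even, write $(p^m-1)/2=(p-1)(N/2)$, so for $y\in\mathbb{F}_p^*$ we get $\eta(y)=\bigl(y^{p-1}\bigr)^{N/2}=1$, which is part (A). For $m$ odd, $N$ is odd and $(p^m-1)/2=\tfrac{p-1}{2}\,N$, so $\eta(y)=\bigl(y^{(p-1)/2}\bigr)^{N}=\bar{\eta}(y)^{N}$; since $\bar{\eta}(y)\in\{1,-1\}$ and $N$ is odd, $\bar{\eta}(y)^N=\bar{\eta}(y)$, giving $\eta(y)=\bar{\eta}(y)$ on $\mathbb{F}_p^*$. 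Finally $\eta(0)=0=\bar{\eta}(0)$ extends the odd-$m$ identity to all of $\mathbb{F}_p$, completing part (B).

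There is no serious obstacle here; the only points that need care are the clean justification of the power formula $\eta(y)=y^{(p^m-1)/2}$ (in particular the identification of the field element $-1$ with the complex number $-1$) and the divisibility bookkeeping when halving $(p-1)N$, which is why the parity of $N$, equivalently of $m$, is the decisive quantity. An alternative, slightly more conceptual route would use the index $[\mathbb{F}_{p^m}^*:\mathbb{F}_p^*]=N$ to see directly that $\eta|_{\mathbb{F}_p^*}=\bar{\eta}^{\,N}=\bar{\eta}^{\,m}$ (using $N\equiv m\pmod 2$ together with $\bar{\eta}$ being $\pm1$-valued), but the explicit exponent computation above is the most self-contained.
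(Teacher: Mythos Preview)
Your argument is correct. The paper does not actually prove this lemma; it merely quotes it from \cite{ding2015twothree} without argument, so there is nothing to compare against here. Your exponent computation via $\eta(y)=y^{(p^m-1)/2}$ and the factorisation $p^m-1=(p-1)N$ with $N\equiv m\pmod 2$ is the standard self-contained proof, and all the bookkeeping (including the extension to $y=0$ in the odd case) is handled cleanly.
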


\begin{lemma}(See Lemma 8 of~\cite{ding2015twothree})\label{lem:yTrace}
We have the following equality:
\begin{equation*}
\sum_{y \in \mathbb{F}_{p}^*}\sum_{x \in \mathbb{F}_{p^m}}
\zeta_p^{y\mathrm{Tr}(x^2)}= \left\{ \begin{array}{lll} 0  & & if~~ m~~ odd, \\
(-1)^{m-1} (-1)^{\frac{m}{ 2}(\frac{p-1}{2})^2}(p-1)p^{\frac{m}{2}}   & &
if~~m~~ even.\\
\end{array}
\right.
\end{equation*}
\end{lemma}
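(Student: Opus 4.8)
The plan is to collapse the double sum into a single Gauss sum multiplied by an elementary character sum, using the exponential-sum evaluation of Lemma \ref{lm:expo sum}. First I would observe that since $y \in \mathbb{F}_p$ and the trace is $\mathbb{F}_p$-linear, we have $y\mathrm{Tr}(x^2) = \mathrm{Tr}(yx^2)$, so that $\zeta_p^{y\mathrm{Tr}(x^2)} = \chi(yx^2)$, where $\chi$ is the canonical additive character of $\mathbb{F}_{p^m}$. Hence the inner sum is $\sum_{x \in \mathbb{F}_{p^m}} \chi(yx^2)$, which is precisely the setting of Lemma \ref{lm:expo sum} applied to $f(x) = yx^2$ (that is, $a_2 = y \neq 0$ and $a_1 = a_0 = 0$). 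Since $\chi(0) = 1$, this yields $\sum_{x}\chi(yx^2) = \eta(y)G(\eta,\chi)$ for each fixed $y \in \mathbb{F}_p^*$.

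Next I would pull the constant $G(\eta,\chi)$ out of the outer sum, reducing the whole expression to $G(\eta,\chi)\sum_{y\in\mathbb{F}_p^*}\eta(y)$. Here Lemma \ref{le:eta} is the decisive tool. When $m$ is odd it identifies $\eta(y)$ with the quadratic character $\bar\eta(y)$ of $\mathbb{F}_p$; since $\bar\eta$ is a nontrivial multiplicative character of $\mathbb{F}_p^*$ (equally many squares and nonsquares), its sum over $\mathbb{F}_p^*$ vanishes, so the whole expression is $0$, which settles the odd case at once. When $m$ is even, Lemma \ref{le:eta} gives $\eta(y) = 1$ for every $y \in \mathbb{F}_p^*$, whence $\sum_{y}\eta(y) = p-1$ and the expression becomes $(p-1)G(\eta,\chi)$.

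It then remains only to insert the explicit value $G(\eta,\chi) = (-1)^{m-1}(\sqrt{-1})^{\frac{(p-1)^2}{4}m}p^{\frac{m}{2}}$ from Lemma \ref{lm:gauss sum} and match the sign. The sole delicate point — which I would flag as the main bookkeeping obstacle, though it is routine — is simplifying the power of $\sqrt{-1}$. Using that $m$ is even and that $\frac{(p-1)^2}{4} = \bigl(\frac{p-1}{2}\bigr)^2$ is an integer, I would write $(\sqrt{-1})^{\frac{(p-1)^2}{4}m} = \bigl((\sqrt{-1})^2\bigr)^{\frac{m}{2}\cdot\frac{(p-1)^2}{4}} = (-1)^{\frac{m}{2}(\frac{p-1}{2})^2}$. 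Combining everything gives $(p-1)G(\eta,\chi) = (-1)^{m-1}(-1)^{\frac{m}{2}(\frac{p-1}{2})^2}(p-1)p^{\frac{m}{2}}$, exactly the claimed even-case value, which completes the proof.
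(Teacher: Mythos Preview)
Your proof is correct. Note, however, that the paper does not actually supply its own proof of this lemma: it is quoted verbatim as Lemma~8 of \cite{ding2015twothree}, so there is no in-paper argument to compare against. That said, your derivation is exactly the natural one given the surrounding toolkit---Lemma~\ref{lm:expo sum} reduces the inner sum to $\eta(y)G(\eta,\chi)$, Lemma~\ref{le:eta} handles $\sum_{y\in\mathbb{F}_p^*}\eta(y)$ according to the parity of $m$, and Lemma~\ref{lm:gauss sum} supplies the explicit constant---and every step is justified. The sign simplification $(\sqrt{-1})^{m(p-1)^2/4}=(-1)^{\frac{m}{2}(\frac{p-1}{2})^2}$ for even $m$ is handled cleanly.
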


\begin{lemma}(See Lemma 9 of~\cite{ding2015twothree})\label{lem:length}
Let $n_0 = \#\{x \in \mathbb{F}_{p^m} : \mathrm{Tr}(x^2)
= 0\}$. Then
\begin{equation*}
n_0 = \left\{\begin{array} {lll} p^{ m-1}        && if~~ m~~ odd ,\\
 p^{m-1}-(-1)^{\frac{m}{ 2}(\frac{p-1}{2})^2}(p-1)p^{\frac{m-2}{2}} && if~~ m~~ even  .\\
\end{array}
\right.
\end{equation*}
\end{lemma}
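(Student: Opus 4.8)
The plan is to evaluate $n_0=\#\{x\in\mathbb{F}_{p^m}:\mathrm{Tr}(x^2)=0\}$ through the orthogonality relation for the additive characters of $\mathbb{F}_p$. Since $\tfrac1p\sum_{y\in\mathbb{F}_p}\zeta_p^{yt}$ equals $1$ for $t=0$ and $0$ for $t\in\mathbb{F}_p^*$, setting $t=\mathrm{Tr}(x^2)$ and summing over $x$ gives
\[
n_0=\frac1p\sum_{x\in\mathbb{F}_{p^m}}\sum_{y\in\mathbb{F}_p}\zeta_p^{y\mathrm{Tr}(x^2)}
   =\frac1p\left(p^m+\sum_{y\in\mathbb{F}_p^*}\sum_{x\in\mathbb{F}_{p^m}}\zeta_p^{y\mathrm{Tr}(x^2)}\right),
\]
the term $p^m$ being the contribution of $y=0$. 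So the whole problem is reduced to the mixed exponential sum $\sum_{y\in\mathbb{F}_p^*}\sum_{x\in\mathbb{F}_{p^m}}\zeta_p^{y\mathrm{Tr}(x^2)}$.

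That double sum is precisely the quantity supplied by Lemma~\ref{lem:yTrace}: it is $0$ when $m$ is odd and $(-1)^{m-1}(-1)^{\frac m2(\frac{p-1}2)^2}(p-1)p^{\frac m2}$ when $m$ is even. Substituting this back, the odd case yields $n_0=\frac1p(p^m+0)=p^{m-1}$ at once. For even $m$ one has $(-1)^{m-1}=-1$, whence
\[
n_0=\frac1p\left(p^m-(-1)^{\frac m2(\frac{p-1}2)^2}(p-1)p^{\frac m2}\right)
   =p^{m-1}-(-1)^{\frac m2(\frac{p-1}2)^2}(p-1)p^{\frac{m-2}2},
\]
which is exactly the asserted formula. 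Thus, granted Lemma~\ref{lem:yTrace}, the proof is essentially immediate.

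I do not expect a genuine obstacle in this route, since the substance has all been absorbed into Lemma~\ref{lem:yTrace}, which in turn is proved by viewing $x\mapsto y\,\mathrm{Tr}(x^2)$ as a quadratic form over $\mathbb{F}_p$ and evaluating the associated Gauss sum with Lemmas~\ref{lm:expo sum} and \ref{lm:gauss sum}. If one preferred a self-contained derivation, one could instead apply Lemma~\ref{lm: solution of quadra form} directly to $f(x)=\mathrm{Tr}(x^2)$, which is nondegenerate of rank $m$ because its polar form $2\,\mathrm{Tr}(xz)$ is nondegenerate for odd $p$; taking $b=0$ and $l=m$, the odd case drops out immediately since $\eta_m(0)=0$, while the even case additionally requires pinning down $\eta_m\!\big((-1)^{m/2}\det(f)\big)$, i.e.\ the type of the form $\mathrm{Tr}(x^2)$. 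That determinant/type computation is the only delicate point, and the character-sum argument above circumvents it altogether.
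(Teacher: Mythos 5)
Your proof is correct, and it is essentially the standard argument: the paper itself states this lemma without proof (citing Lemma 9 of \cite{ding2015twothree}), and the cited source derives it exactly as you do, by orthogonality of additive characters combined with the evaluation of $\sum_{y\in\mathbb{F}_p^*}\sum_{x\in\mathbb{F}_{p^m}}\zeta_p^{y\mathrm{Tr}(x^2)}$ given in Lemma~\ref{lem:yTrace}. The arithmetic in both the odd and even cases checks out, so nothing further is needed.
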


\section{The proofs of the main results}\label{sec:proof}

Our task of this section is to prove Theorems \ref{thcwe:CD} and \ref{thcwe:CDb}
depicted in Section \ref{sec:intro}, while Corollary \ref{wd:CDb} follows immediately from Theorem \ref{thcwe:CDb}. In the following, a series of auxiliary results are described and proved for this purpose.

\begin{lemma}\label{lem:sum3}
Let $a\in \mathbb{F}_{p^m}^* $ and $\rho\in \mathbb{F}_{p}$. Then, for $m\geqslant3$ being odd, we have
\begin{eqnarray*}
  &&\sum_{y\in\mathbb{F}_{p}^*}\sum_{z\in\mathbb{F}_{p}^*}\sum_{x\in\mathbb{F}_{p^m}}\zeta_p^{\mathrm{Tr}(yx^2+azx)-z\rho}\\
  &&=\left\{\begin{array}{lll}
   0 &&if~~ \mathrm{Tr}(a^2)=0~~and~~\rho=0,\\
   0 &&if~~ \mathrm{Tr}(a^2)=0~~and~~\rho\neq0,\\
  (-1)^{\frac{m-1}{2}\frac{p-1}{2}}(p-1)p^{\frac{m+1}{2}}\bar{\eta}(\mathrm{Tr}(a^2))
     &&if~~ \mathrm{Tr}(a^2)\neq0~~and~~\rho=0,\\
   -(-1)^{\frac{m-1}{2}\frac{p-1}{2}}p^{\frac{m+1}{2}}\bar{\eta}(\mathrm{Tr}(a^2))
   &&if~~ \mathrm{Tr}(a^2)\neq0~~and~~\rho\neq0,\\
\end{array} \right.
 \end{eqnarray*}
and for $m\geqslant 2$ being even, we have
  \begin{eqnarray*}
  &&\sum_{y\in\mathbb{F}_{p}^*}\sum_{z\in\mathbb{F}_{p}^*}\sum_{x\in\mathbb{F}_{p^m}}\zeta_p^{\mathrm{Tr}(yx^2+azx)-z\rho}\\
  &&=\left\{\begin{array}{lll}
  -(-1)^{\frac{m}{2}(\frac{p-1}{2})^2}(p-1)^2 p^{\frac{m}{2}}&&if~~ \mathrm{Tr}(a^2)=0~~and~~\rho=0,\\
   (-1)^{\frac{m}{2}(\frac{p-1}{2})^2}(p-1) p^{\frac{m}{2}}&&if~~ \mathrm{Tr}(a^2)=0~~and~~\rho\neq0,\\
   (-1)^{\frac{m}{2}(\frac{p-1}{2})^2}(p-1) p^{\frac{m}{2}}&&if~~ \mathrm{Tr}(a^2)\neq0~~and~~\rho=0,\\
   -(-1)^{\frac{m}{2}(\frac{p-1}{2})^2}p^{\frac{m}{2}}&&if~~ \mathrm{Tr}(a^2)\neq0~~and~~\rho\neq0.\\
\end{array} \right.
 \end{eqnarray*}

\end{lemma}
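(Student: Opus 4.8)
The plan is to integrate out the three summation variables one at a time — first $x$, then $z$, then $y$ — using the exponential-sum evaluations of Section~\ref{sec:mathtool} at each stage, and to postpone all sign bookkeeping to the end. \textbf{Step 1 (the $x$-sum).} Since $y,z,\rho\in\mathbb{F}_p$ we may write $\zeta_p^{\mathrm{Tr}(yx^2+azx)-z\rho}=\zeta_p^{-z\rho}\chi(yx^2+azx)$. Treating $f(x)=yx^2+azx$ as a polynomial in $\mathbb{F}_{p^m}[x]$ with leading coefficient $y\neq0$, Lemma~\ref{lm:expo sum} gives $\sum_{x\in\mathbb{F}_{p^m}}\chi(yx^2+azx)=\chi\!\bigl(-a^2z^2(4y)^{-1}\bigr)\eta(y)G(\eta,\chi)$, and since $z^2(4y)^{-1}\in\mathbb{F}_p$ one has $\chi\!\bigl(-a^2z^2(4y)^{-1}\bigr)=\zeta_p^{-z^2\mathrm{Tr}(a^2)(4y)^{-1}}$. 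Writing $A:=\mathrm{Tr}(a^2)$, the triple sum becomes
\[
G(\eta,\chi)\sum_{y\in\mathbb{F}_p^*}\eta(y)\sum_{z\in\mathbb{F}_p^*}\zeta_p^{-Az^2(4y)^{-1}-z\rho}.
\]

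\textbf{Step 2 (the case $A=0$).} Now the $z$-sum is just $\sum_{z\in\mathbb{F}_p^*}\zeta_p^{-z\rho}$, equal to $p-1$ if $\rho=0$ and to $-1$ if $\rho\neq0$, so what remains is the scalar factor $\sum_{y\in\mathbb{F}_p^*}\eta(y)$. By Lemma~\ref{le:eta} this is $0$ for odd $m$ (killing the whole sum, which gives the first two lines of the odd case) and $p-1$ for even $m$; substituting $G(\eta,\chi)$ from Lemma~\ref{lm:gauss sum} — where $(-1)^{m-1}=-1$ and, $m$ being even, the power of $\sqrt{-1}$ collapses to $(-1)^{\frac m2(\frac{p-1}2)^2}$ — produces the two even-case values.

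\textbf{Step 3 (the case $A\neq0$).} Extend the $z$-sum over all of $\mathbb{F}_p$ and subtract the $z=0$ term: $\sum_{z\in\mathbb{F}_p^*}=\sum_{z\in\mathbb{F}_p}-1$. Applying Lemma~\ref{lm:expo sum} again, this time over $\mathbb{F}_p$ to $g(z)=-A(4y)^{-1}z^2-\rho z$ and using $\bar\eta(4)=1$, gives $\sum_{z\in\mathbb{F}_p}\bar\chi(g(z))=\zeta_p^{\rho^2yA^{-1}}\bar\eta(-y)\bar\eta(A)\,G(\bar\eta,\bar\chi)$. Hence the triple sum equals
\[
G(\eta,\chi)\bar\eta(A)G(\bar\eta,\bar\chi)\sum_{y\in\mathbb{F}_p^*}\eta(y)\bar\eta(-y)\zeta_p^{\rho^2yA^{-1}}-G(\eta,\chi)\sum_{y\in\mathbb{F}_p^*}\eta(y).
\]
For odd $m$, $\eta=\bar\eta$ on $\mathbb{F}_p$ by Lemma~\ref{le:eta}, so the last sum is $0$ and $\eta(y)\bar\eta(-y)=\bar\eta(-1)$ is constant; the first sum is then $\bar\eta(-1)(p-1)$ when $\rho=0$ and $-\bar\eta(-1)$ when $\rho\neq0$ (elementary geometric sums). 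For even $m$, $\eta\equiv1$ on $\mathbb{F}_p^*$, so the last sum is $p-1$, while $\sum_{y}\bar\eta(-y)\zeta_p^{\rho^2yA^{-1}}$ vanishes when $\rho=0$ and, after the substitution $u=\rho^2yA^{-1}$, equals $\bar\eta(-A)G(\bar\eta,\bar\chi)$ when $\rho\neq0$; combining this with $G(\bar\eta,\bar\chi)^2=\bar\eta(-1)p$ yields the telescoping $G(\eta,\chi)p-G(\eta,\chi)(p-1)=G(\eta,\chi)$.

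\textbf{Step 4 (assembling, and the main obstacle).} It remains to insert $G(\eta,\chi)=(-1)^{m-1}(\sqrt{-1})^{\frac{(p-1)^2}{4}m}p^{\frac m2}$ and $G(\bar\eta,\bar\chi)=(\sqrt{-1})^{\frac{(p-1)^2}{4}}p^{\frac12}$ from Lemma~\ref{lm:gauss sum}, together with $\bar\eta(-1)=(-1)^{\frac{p-1}2}$, and simplify. This sign bookkeeping is the only delicate step: for odd $m$ one must verify $G(\eta,\chi)G(\bar\eta,\bar\chi)\bar\eta(-1)=(-1)^{\frac{m-1}2\cdot\frac{p-1}2}p^{\frac{m+1}2}$, which follows from $\bigl(\frac{p-1}2\bigr)^2\equiv\frac{p-1}2\pmod 2$ and the congruence $\frac{m+1}2\bigl(\frac{p-1}2\bigr)^2+\frac{p-1}2\equiv\frac{m-1}2\cdot\frac{p-1}2\pmod 2$; for even $m$ one only needs $(-1)^{m-1}=-1$, the $\sqrt{-1}$-power collapse used in Step~2, and $\bar\eta(A)^2=1$. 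Matching the resulting expressions against the four right-hand sides in the odd case and the four in the even case completes the proof.
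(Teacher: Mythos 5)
Your proof is correct and follows the same overall strategy as the paper: evaluate the $x$-sum with Lemma~\ref{lm:expo sum}, reduce everything to character sums over $\mathbb{F}_p$, split into cases according to $\mathrm{Tr}(a^2)$ and $\rho$, invoke Lemma~\ref{le:eta} to separate odd and even $m$, and finish with the Gauss-sum values of Lemma~\ref{lm:gauss sum} together with the parity congruence $\bigl(\tfrac{p-1}{2}\bigr)^2\equiv\tfrac{p-1}{2}\pmod 2$. The one place where you genuinely diverge is the case $\mathrm{Tr}(a^2)\neq 0$: the paper rewrites $\eta(y)=\eta(-z^2\mathrm{Tr}(a^2)y)\,\eta(-\mathrm{Tr}(a^2))$ so that the inner $y$-sum collapses to a Gauss sum over $\mathbb{F}_p$ that is independent of $z$, after which the outer $z$-sum is simply $\sum_{z\in\mathbb{F}_p^*}\zeta_p^{-z\rho}\in\{p-1,\,-1\}$; you instead complete the square in $z$ by a second application of Lemma~\ref{lm:expo sum} over $\mathbb{F}_p$ and only then evaluate the remaining $y$-sum, which in the even case costs you the extra identity $G(\bar\eta,\bar\chi)^2=\bar\eta(-1)p$ and the telescoping $G(\eta,\chi)p-G(\eta,\chi)(p-1)=G(\eta,\chi)$. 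Both orders of summation yield the same four values in each parity class; the paper's order is marginally shorter because the $z$-dependence drops out before any second quadratic sum is needed, but your version is equally rigorous, and the intermediate evaluations I checked --- $\bar\eta(a_2)=\bar\eta(-y)\bar\eta(\mathrm{Tr}(a^2))$, the substitution $u=\rho^2 y\,\mathrm{Tr}(a^2)^{-1}$, and the final sign identity $G(\eta,\chi)G(\bar\eta,\bar\chi)\bar\eta(-1)=(-1)^{\frac{m-1}{2}\frac{p-1}{2}}p^{\frac{m+1}{2}}$ for odd $m$ --- are all correct.
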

\begin{proof}
It follows from Lemmas \ref{lm:expo sum} and \ref{le:eta} that
\begin{eqnarray*}
  &&\sum_{y\in\mathbb{F}_{p}^*}\sum_{z\in\mathbb{F}_{p}^*}\sum_{x\in\mathbb{F}_{p^m}}\zeta_p^{\mathrm{Tr}(yx^2+azx)-z\rho}\\
  &&=\sum_{z\in\mathbb{F}_{p}^*}\zeta_p^{-z\rho}\sum_{y\in\mathbb{F}_{p}^*}\sum_{x\in\mathbb{F}_{p^m}}\zeta_p^{\mathrm{Tr}(yx^2+azx)}\\
  &&=\sum_{z\in\mathbb{F}_{p}^*}\zeta_p^{-z\rho}\sum_{y\in\mathbb{F}_{p}^*}\chi\left(-\frac{a^2 z^2}{4y}\right)\eta(y)G(\eta,\chi)\\
  &&=G(\eta,\chi)\sum_{z\in\mathbb{F}_{p}^*}\zeta_p^{-z\rho}\sum_{y_1\in\mathbb{F}_{p}^*}\chi(-a^2 z^2 y_1)\eta\left(\frac{1}{4y_1}\right)\\
   &&=G(\eta,\chi)\sum_{z\in\mathbb{F}_{p}^*}\zeta_p^{-z\rho}\sum_{y\in\mathbb{F}_{p}^*}\chi(-a^2 z^2 y)\eta(y)\\
  &&=G(\eta,\chi)\sum_{z\in\mathbb{F}_{p}^*}\zeta_p^{-z\rho}\sum_{y\in\mathbb{F}_{p}^*}\zeta_p^{-z^2\mathrm{Tr}(a^2)y}\eta(y)\\
  &&=\left\{\begin{array}{lll}
  G(\eta,\chi)\sum_{z\in\mathbb{F}_{p}^*}\zeta_p^{-z\rho}\sum_{y\in\mathbb{F}_{p}^*}\eta(y)&&\mathrm{if}~~\mathrm{Tr}(a^2)=0,\\
  G(\eta,\chi)\sum_{z\in\mathbb{F}_{p}^*}\zeta_p^{-z\rho}\sum_{y\in\mathbb{F}_{p}^*}\zeta_p^{-z^2\mathrm{Tr}(a^2)y}\eta(-z^2\mathrm{Tr}(a^2)y)\eta(-\mathrm{Tr}(a^2))&&\mathrm{if}~~\mathrm{Tr}(a^2)\neq0.\\
\end{array} \right.
\end{eqnarray*}

 For the case of $m$ being odd, by Lemma \ref{le:eta}, we have
 \begin{eqnarray*}
  &&\sum_{y\in\mathbb{F}_{p}^*}\sum_{z\in\mathbb{F}_{p}^*}\sum_{x\in\mathbb{F}_{p^m}}\zeta_p^{\mathrm{Tr}(yx^2+azx)-z\rho}\\
  &&=\left\{\begin{array}{lll}
  G(\eta,\chi)\sum_{z\in\mathbb{F}_{p}^*}\zeta_p^{-z\rho}\sum_{y\in\mathbb{F}_{p}^*}\bar{\eta}(y)&&\mathrm{if}~~\mathrm{Tr}(a^2)=0\\
  G(\eta,\chi)\sum_{z\in\mathbb{F}_{p}^*}\zeta_p^{-z\rho}\sum_{y\in\mathbb{F}_{p}^*}\zeta_p^{-z^2\mathrm{Tr}(a^2)y}\bar{\eta}(-z^2\mathrm{Tr}(a^2)y)\bar{\eta}(-\mathrm{Tr}(a^2))
  &&\mathrm{if}~~\mathrm{Tr}(a^2)\neq0\\
\end{array} \right.\\
  &&=\left\{\begin{array}{lll}
  0&&\mathrm{if}~~\mathrm{Tr}(a^2)=0\\
  G(\eta,\chi)\sum_{z\in\mathbb{F}_{p}^*}\zeta_p^{-z\rho}G(\bar{\eta},\bar{\chi})\bar{\eta}(-\mathrm{Tr}(a^2))
  &&\mathrm{if}~~\mathrm{Tr}(a^2)\neq0\\
\end{array} \right.\\
&&=\left\{\begin{array}{lll}
 0 &&\mathrm{if}~~ \mathrm{Tr}(a^2)=0~~\mathrm{and}~~\rho=0,\\
 0 &&\mathrm{if}~~ \mathrm{Tr}(a^2)=0~~\mathrm{and}~~\rho\neq0,\\
 (p-1)G(\eta,\chi)G(\bar{\eta},\bar{\chi})\bar{\eta}(-\mathrm{Tr}(a^2))
 &&\mathrm{if}~~ \mathrm{Tr}(a^2)\neq0~~\mathrm{and}~~\rho=0,\\
 -G(\eta,\chi)G(\bar{\eta},\bar{\chi})\bar{\eta}(-\mathrm{Tr}(a^2))&&\mathrm{if}~~ \mathrm{Tr}(a^2)\neq0~~\mathrm{and}~~\rho\neq0.\\
\end{array} \right.
\end{eqnarray*}
The desired conclusions then follow from Lemma \ref{lm:gauss sum} and the fact that $$(-1)^{\frac{p-1}{2}+\frac{m+1}{2}(\frac{p-1}{2})^2}=(-1)^{\frac{m-1}{2}\frac{p-1}{2}}.$$

 Similarly, for the case of $m$ being even, we can deduce that
 \begin{eqnarray*}
  &&\sum_{y\in\mathbb{F}_{p}^*}\sum_{z\in\mathbb{F}_{p}^*}\sum_{x\in\mathbb{F}_{p^m}}\zeta_p^{\mathrm{Tr}(yx^2+azx)-z\rho}\\
  &&=\left\{\begin{array}{lll}
  (p-1)G(\eta,\chi)\sum_{z\in\mathbb{F}_{p}^*}\zeta_p^{-z\rho}&&\mathrm{if}~~\mathrm{Tr}(a^2)=0\\
  -G(\eta,\chi)\sum_{z\in\mathbb{F}_{p}^*}\zeta_p^{-z\rho}
  &&\mathrm{if}~~\mathrm{Tr}(a^2)\neq0\\
\end{array} \right.\\
 &&=\left\{\begin{array}{lll}
 (p-1)^2G(\eta,\chi) &&\mathrm{if}~~ \mathrm{Tr}(a^2)=0~~\mathrm{and}~~\rho=0,\\
  -(p-1)G(\eta,\chi) &&\mathrm{if}~~ \mathrm{Tr}(a^2)=0~~\mathrm{and}~~\rho\neq0,\\
  -(p-1)G(\eta,\chi) &&\mathrm{if}~~ \mathrm{Tr}(a^2)\neq0~~\mathrm{and}~~\rho=0,\\
  G(\eta,\chi)&&\mathrm{if}~~ \mathrm{Tr}(a^2)\neq0~~\mathrm{and}~~\rho\neq0.\\
\end{array} \right.
\end{eqnarray*}
From Lemma \ref{lm:gauss sum} again, we obtain the desired conclusions.
 \end{proof}

\begin{lemma}\label{lem:Na rho}
For any $a\in \mathbb{F}_{p^m}^* $ and any $\rho\in \mathbb{F}_{p}$, let
\begin{equation*}
    N_a(\rho)=\#\{x\in \mathbb{F}_{p^m}:\mathrm{Tr}(x^2)=0 ~~ \mathrm{and }~~ \mathrm{Tr}(ax)=\rho \}.
\end{equation*}
Then,
 for $m\geqslant 3$ being odd, we have
\begin{equation*}
   N_a(\rho)=\left\{\begin{array}{lll}
   p^{m-2} &&if~~ \mathrm{Tr}(a^2)=0~~and~~\rho=0,\\
   p^{m-2} &&if~~ \mathrm{Tr}(a^2)=0~~and~~\rho\neq0,\\
   p^{m-2}\!+\!(-1)^{\frac{m-1}{2}\frac{p-1}{2}}(p-1)p^{\frac{m-3}{2}}\bar{\eta}(\mathrm{Tr}(a^2))
     &&if~~ \mathrm{Tr}(a^2)\neq0~~and~~\rho=0,\\
   p^{m-2}-(-1)^{\frac{m-1}{2}\frac{p-1}{2}}p^{\frac{m-3}{2}}\bar{\eta}(\mathrm{Tr}(a^2))
   &&if~~ \mathrm{Tr}(a^2)\neq0~~and~~\rho\neq0,\\
\end{array} \right.
 \end{equation*}
 and for $m\geqslant 2$ being even, we have
 \begin{equation*}
   N_a(\rho)=\left\{\begin{array}{lll}
   p^{m-2}-(-1)^{\frac{m}{2}(\frac{p-1}{2})^2}(p-1)p^{\frac{m-2}{2}}&&if~~ \mathrm{Tr}(a^2)=0~~and~~\rho=0,\\
   p^{m-2}&&if~~ \mathrm{Tr}(a^2)=0~~and~~\rho\neq0,\\
   p^{m-2}&&if~~ \mathrm{Tr}(a^2)\neq0~~and~~\rho=0,\\
   p^{m-2}-(-1)^{\frac{m}{2}(\frac{p-1}{2})^2}p^{\frac{m-2}{2}}&&if~~ \mathrm{Tr}(a^2)\neq0~~and~~\rho\neq0.\\
\end{array} \right.
 \end{equation*}
\end{lemma}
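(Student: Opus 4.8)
The plan is to realize both defining conditions $\mathrm{Tr}(x^2)=0$ and $\mathrm{Tr}(ax)=\rho$ through additive characters, so that $N_a(\rho)$ becomes a combination of the exponential sums already evaluated in Lemmas \ref{lem:yTrace} and \ref{lem:sum3}. Using the orthogonality relation that $\sum_{y\in\mathbb{F}_p}\zeta_p^{yt}$ equals $p$ if $t=0$ and $0$ otherwise, one writes
\[
N_a(\rho)=\frac{1}{p^2}\sum_{x\in\mathbb{F}_{p^m}}\left(\sum_{y\in\mathbb{F}_p}\zeta_p^{y\mathrm{Tr}(x^2)}\right)\left(\sum_{z\in\mathbb{F}_p}\zeta_p^{z(\mathrm{Tr}(ax)-\rho)}\right),
\]
and expanding the product while separating $y=0$ from $y\neq0$ and $z=0$ from $z\neq0$ splits the right-hand side into four pieces.

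First I would dispose of the two trivial pieces. The $(y,z)=(0,0)$ piece contributes $p^m/p^2=p^{m-2}$. In the piece with $y=0$ and $z\neq0$, interchanging the summation order gives $\frac{1}{p^2}\sum_{z\in\mathbb{F}_p^*}\zeta_p^{-z\rho}\sum_{x\in\mathbb{F}_{p^m}}\chi(zax)$, and since $a\neq0$ the inner character sum over $x$ vanishes for every $z\neq0$, so this piece is $0$. The piece with $y\neq0$ and $z=0$ is exactly $\frac{1}{p^2}\sum_{y\in\mathbb{F}_p^*}\sum_{x\in\mathbb{F}_{p^m}}\zeta_p^{y\mathrm{Tr}(x^2)}$, whose value is given by Lemma \ref{lem:yTrace}, and the piece with $y\neq0$ and $z\neq0$ is precisely $\frac{1}{p^2}$ times the triple sum evaluated in Lemma \ref{lem:sum3}.

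It then remains to add things up. Writing $S_1$ for the sum of Lemma \ref{lem:yTrace} (which vanishes when $m$ is odd) and $S_2$ for the case-dependent value of Lemma \ref{lem:sum3}, we obtain $N_a(\rho)=p^{m-2}+(S_1+S_2)/p^2$, and the four lines of each case are read off by routine arithmetic. When $m$ is odd only $S_2/p^2$ survives, dividing the power $p^{(m+1)/2}$ occurring in Lemma \ref{lem:sum3} down to $p^{(m-3)/2}$. When $m$ is even one first rewrites $S_1=-(-1)^{\frac{m}{2}(\frac{p-1}{2})^2}(p-1)p^{m/2}$ using $(-1)^{m-1}=-1$; then in the case $\mathrm{Tr}(a^2)=0$, $\rho=0$ one finds $S_1+S_2=-(-1)^{\frac{m}{2}(\frac{p-1}{2})^2}(p-1)p\cdot p^{m/2}$, so $(S_1+S_2)/p^2=-(-1)^{\frac{m}{2}(\frac{p-1}{2})^2}(p-1)p^{(m-2)/2}$; in the two mixed cases $S_1$ and $S_2$ cancel; and in the case $\mathrm{Tr}(a^2)\neq0$, $\rho\neq0$ one finds $S_1+S_2=-(-1)^{\frac{m}{2}(\frac{p-1}{2})^2}p\cdot p^{m/2}$, so $(S_1+S_2)/p^2=-(-1)^{\frac{m}{2}(\frac{p-1}{2})^2}p^{(m-2)/2}$. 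Matching these with the statement finishes the proof.

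I do not expect a genuine obstacle here, since the lemma is in essence a repackaging of Lemmas \ref{lem:yTrace} and \ref{lem:sum3} via character orthogonality. The only point demanding care is the sign and exponent bookkeeping: keeping the factors $(-1)^{\frac{m-1}{2}\frac{p-1}{2}}$ (for odd $m$), $(-1)^{\frac{m}{2}(\frac{p-1}{2})^2}$ (for even $m$), and the quadratic character $\bar{\eta}(\mathrm{Tr}(a^2))$ straight, and correctly absorbing the factor $1/p^2$ into the powers of $p$.
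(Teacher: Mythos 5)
Your proposal is correct and follows essentially the same route as the paper: the paper likewise expands $N_a(\rho)$ via character orthogonality into the four pieces $p^{m-2}$, the vanishing $y=0,z\neq0$ term, the Lemma \ref{lem:yTrace} term, and the Lemma \ref{lem:sum3} triple sum, then reads off the values. Your sign and exponent bookkeeping in both the odd and even cases matches the stated formulas.
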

\begin{proof}
For any $a\in \mathbb{F}_{p^m}^* $ and any $\rho\in \mathbb{F}_{p}$, we have
\begin{eqnarray*}
    N_a(\rho)&=&
        p^{-2}\sum_{x\in\mathbb{F}_{p^m}}\left(\sum_{y\in\mathbb{F}_{p}}\zeta_p^{y\mathrm{Tr}(x^2)}\right)
        \left(\sum_{z\in\mathbb{F}_{p}}\zeta_p^{z(\mathrm{Tr}(ax)-\rho)}\right)\\
    &=& p^{-2}\sum_{z\in\mathbb{F}_{p}^*}\sum_{x\in\mathbb{F}_{p^m}}\zeta_p^{z(\mathrm{Tr}(ax)-\rho)}
    +p^{-2}\sum_{y\in\mathbb{F}_{p}^*}\sum_{x\in\mathbb{F}_{p^m}}\zeta_p^{y\mathrm{Tr}(x^2)}+\\
    && p^{-2}\sum_{y\in\mathbb{F}_{p}^*}\sum_{z\in\mathbb{F}_{p}^*}\sum_{x\in\mathbb{F}_{p^m}}\zeta_p^{\mathrm{Tr}(yx^2+azx)-z\rho}+p^{m-2}.
 \end{eqnarray*}

 Note that
\begin{equation*}
\sum_{z\in\mathbb{F}_{p}^*}\sum_{x\in\mathbb{F}_{p^m}}\zeta_p^{z(\mathrm{Tr}(ax)-\rho)}=\sum_{z\in\mathbb{F}_{p}^*}\zeta_p^{-z\rho}\sum_{x\in\mathbb{F}_{p^m}}\zeta_p^{z\mathrm{Tr}(ax)}=0
\end{equation*}
since
\begin{equation*}
   \sum_{x\in\mathbb{F}_{p^m}}\zeta_p^{z\mathrm{Tr}(ax)}=\left\{\begin{array}{lll}p^m,&&\mathrm{if}~~z=0,\\
0,&&\mathrm{if}~~z\in\mathbb{F}_{p}^*.\\
\end{array} \right.
 \end{equation*}
The desired conclusions then follow from Lemma \ref{lem:yTrace} and \ref{lem:sum3}.

\end{proof}

\begin{lemma}\label{lem:ti}
Suppose that $m\geqslant 3$ is odd. Let $t_i=\#\{x\in \mathbb{F}_{p^m}^*:\bar{\eta}(\mathrm{Tr}(x^2))=i\}$ with $i\in\{0,1,-1\}$. Then
 \begin{eqnarray*}
 \left\{\begin{array}{lll}
 t_0 &=& p^{m-1}-1, \\
 t_1 &=& \frac{p-1}{2}\left(p^{m-1}+(-1)^{\frac{m-1}{2}\frac{p-1}{2}}p^{\frac{m-1}{2}}\right), \\
 t_{-1} &=& \frac{p-1}{2}\left(p^{m-1}-(-1)^{\frac{m-1}{2}\frac{p-1}{2}}p^{\frac{m-1}{2}}\right). \end{array} \right.
 \end{eqnarray*}
\end{lemma}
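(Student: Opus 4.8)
The plan is to count the elements $x \in \mathbb{F}_{p^m}^*$ according to whether $\mathrm{Tr}(x^2)$ is zero, a nonzero square, or a nonsquare in $\mathbb{F}_p$, by reducing each count to a solution count for the quadratic form $\mathrm{Tr}(x^2)$ and then invoking Lemma~\ref{lm: solution of quadra form}. First I would observe that $f(x) = \mathrm{Tr}(x^2)$ is a nondegenerate quadratic form in $m$ variables over $\mathbb{F}_p$; one checks nondegeneracy from the fact that $\mathrm{Tr}$ is a nonzero $\mathbb{F}_p$-linear functional and $x \mapsto x^2$ has the associated bilinear form $(x,z) \mapsto \mathrm{Tr}(2xz) = 2\,\mathrm{Tr}(xz)$, which is nondegenerate since the trace bilinear form on $\mathbb{F}_{p^m}$ is. For the count $t_0$: since $m$ is odd, Lemma~\ref{lem:length} gives $n_0 = \#\{x \in \mathbb{F}_{p^m} : \mathrm{Tr}(x^2) = 0\} = p^{m-1}$, and subtracting the solution $x = 0$ yields $t_0 = p^{m-1} - 1$ immediately.

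Next I would compute $t_1 + t_{-1}$ and $t_1 - t_{-1}$ and solve the linear system. The sum is trivial: $t_1 + t_{-1} = (p^m - 1) - t_0 = p^m - p^{m-1} = (p-1)p^{m-1}$. For the difference, write
\begin{equation*}
t_1 - t_{-1} = \sum_{x \in \mathbb{F}_{p^m}^*} \bar{\eta}(\mathrm{Tr}(x^2)) = \sum_{x \in \mathbb{F}_{p^m}} \bar{\eta}(\mathrm{Tr}(x^2)),
\end{equation*}
the last equality because $\bar\eta(0) = 0$. Now expand $\bar\eta$ via its orthogonality / Gauss-sum relation, or more directly group the sum by the value $c = \mathrm{Tr}(x^2)$: letting $M(c)$ be the number of $x$ with $\mathrm{Tr}(x^2) = c$, we get $\sum_{c \in \mathbb{F}_p} \bar\eta(c) M(c) = \sum_{c \in \mathbb{F}_p^*} \bar\eta(c) M(c)$. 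For odd $m$ and odd $l = m$, Lemma~\ref{lm: solution of quadra form} gives $M(c) = p^{m-1} + p^{\frac{m-1}{2}} \eta_m\!\big((-1)^{\frac{m-1}{2}} c \,\mathrm{det}(f)\big)$, and by Lemma~\ref{le:eta} we have $\eta_m(y) = \bar\eta(y)$ for $y \in \mathbb{F}_p$. Since $\sum_{c \in \mathbb{F}_p^*}\bar\eta(c) = 0$, the $p^{m-1}$ term drops, leaving $t_1 - t_{-1} = p^{\frac{m-1}{2}} \bar\eta\big((-1)^{\frac{m-1}{2}}\mathrm{det}(f)\big) \sum_{c \in \mathbb{F}_p^*} \bar\eta(c)^2 = (p-1) p^{\frac{m-1}{2}} \bar\eta\big((-1)^{\frac{m-1}{2}}\mathrm{det}(f)\big)$.

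The one genuine obstacle is pinning down the sign $\bar\eta\big((-1)^{\frac{m-1}{2}}\mathrm{det}(f)\big)$ and showing it equals $(-1)^{\frac{m-1}{2}\frac{p-1}{2}}$, so that the stated closed forms for $t_1$ and $t_{-1}$ drop out of solving the two-equation linear system. I would handle this by an alternative evaluation of $\sum_{x \in \mathbb{F}_{p^m}} \bar\eta(\mathrm{Tr}(x^2))$ that avoids computing $\mathrm{det}(f)$ explicitly: use the identity $\bar\eta(c) = \frac{1}{G(\bar\eta,\bar\chi)}\sum_{z \in \mathbb{F}_p^*}\bar\eta(z)\zeta_p^{cz}$, interchange sums, and reduce to $\sum_{z \in \mathbb{F}_p^*}\bar\eta(z)\sum_{x \in \mathbb{F}_{p^m}}\zeta_p^{z\,\mathrm{Tr}(x^2)} = \sum_{z \in \mathbb{F}_p^*}\bar\eta(z)\eta(z)G(\eta,\chi)$ by Lemma~\ref{lm:expo sum}; since $m$ is odd, $\eta(z) = \bar\eta(z)$, so this is $(p-1)G(\eta,\chi)/G(\bar\eta,\bar\chi)$, and Lemma~\ref{lm:gauss sum} evaluates the ratio of Gauss sums to $(-1)^{m-1}(\sqrt{-1})^{\frac{(p-1)^2}{4}(m-1)}p^{\frac{m-1}{2}} = (-1)^{\frac{m-1}{2}\frac{p-1}{2}}p^{\frac{m-1}{2}}$, using that $m-1$ is even so $(-1)^{m-1} = 1$ and $(\sqrt{-1})^{\frac{(p-1)^2}{4}(m-1)} = (-1)^{\frac{(p-1)^2}{8}(m-1)}$, which matches $(-1)^{\frac{m-1}{2}\frac{p-1}{2}}$ after noting $\frac{(p-1)^2}{8}(m-1)$ and $\frac{m-1}{2}\cdot\frac{p-1}{2}$ agree modulo $2$ (both parities are governed by $\frac{m-1}{2}$ and $\frac{p-1}{2}$). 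Finally, solving $t_1 + t_{-1} = (p-1)p^{m-1}$ and $t_1 - t_{-1} = (p-1)p^{\frac{m-1}{2}}(-1)^{\frac{m-1}{2}\frac{p-1}{2}}$ gives exactly the claimed expressions for $t_1$ and $t_{-1}$.
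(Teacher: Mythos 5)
Your proof is correct, but the key step is carried out by a genuinely different route than the paper's. You both dispose of $t_0$ the same way (Lemma~\ref{lem:length} minus the point $x=0$). For $t_{\pm 1}$, the paper works directly with the solution counts $M(\beta)=\#\{x:\mathrm{Tr}(x^2)=\beta\}$: it diagonalizes the rank-$m$ quadratic form $\mathrm{Tr}(x^2)$ as $\sum_{i=1}^m x_i^2$ with respect to an \emph{orthonormal} (self-dual) basis of $\mathbb{F}_{p^m}$ over $\mathbb{F}_p$, so that $\det(f)=1$ and Lemma~\ref{lm: solution of quadra form} gives $M(\beta)=p^{m-1}+p^{\frac{m-1}{2}}\bar{\eta}\left((-1)^{\frac{m-1}{2}}\beta\right)$, after which summing over the $(p-1)/2$ quadratic residues $\beta$ (and using $\bar{\eta}(-1)=(-1)^{\frac{p-1}{2}}$) yields $t_1$. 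You correctly identify that the only real issue in that route is the value of $\bar{\eta}\left((-1)^{\frac{m-1}{2}}\det(f)\right)$, and instead of resolving it through the diagonalization you compute $t_1-t_{-1}=\sum_{x}\bar{\eta}(\mathrm{Tr}(x^2))$ by expanding $\bar{\eta}$ as a Gauss-sum average and invoking Lemmas~\ref{lm:expo sum}, \ref{le:eta} and \ref{lm:gauss sum}; your evaluation of the ratio $G(\eta,\chi)/G(\bar{\eta},\bar{\chi})=(-1)^{\frac{m-1}{2}\frac{p-1}{2}}p^{\frac{m-1}{2}}$ for odd $m$ is correct, and solving the two-by-two linear system gives the stated formulas. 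The trade-off: the paper's argument is shorter but silently relies on the existence of a self-dual basis of $\mathbb{F}_{p^m}/\mathbb{F}_p$ for odd $m$ (equivalently, that the discriminant of the trace form is a square when $m$ is odd), a nontrivial fact it does not justify; your character-sum argument avoids that entirely and uses exactly the Gauss-sum machinery already deployed in Lemma~\ref{lem:sum3}, at the cost of a slightly longer sign computation.
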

\begin{proof}
The case $i=0$ follows from Lemma \ref{lem:length}.

We only give the proof of the case $i=1$ since the other case $i=-1$ is similar.

Note that $\bar{\eta}(\mathrm{Tr}(x^2))=1$ if and only if $\mathrm{Tr}(x^2)=\beta$, where $\beta$ is a quadratic residue over $\mathbb{F}_{p}$.

For each $x\in \mathbb{F}_{p^m}^*$, we can verify that $\mathrm{Tr}(x^2)$ is a quadratic form over $\mathbb{F}_{p}$ with rank $m$.  It follows at once that $\mathrm{Tr}(x^2)=\beta$ can be transformed into the form
\begin{equation}\label{solution}
    \sum_{i=1}^m x_i^2=\beta,
\end{equation} under an orthonormal basis $\{\alpha_1,\alpha_2,\cdots,\alpha_m\}$ of $\mathbb{F}_{p^m}$
over $\mathbb{F}_{p}$ and
$x=\sum_{i=1}^m x_i\alpha_i$ with $x_i\in \mathbb{F}_{p}$.

Since $\mathbb{F}_{p}$ contains $(p-1)/2$ quadratic residues, the desired conclusion then follows from Equation \eqref{solution} and Lemma \ref{lm: solution of quadra form}.
\end{proof}

\subsection{The proof of Theorem \ref{thcwe:CD}}\label{sec:proofthCD}

Recall that
\begin{equation*}
    C_D=\{(\mathrm{Tr}(ad_1),\mathrm{Tr}(ad_2),\cdots,\mathrm{Tr}(ad_n)):a\in \mathbb{F}_{p^m}\},
\end{equation*} where $ D=\{x\in \mathbb{F}_{p^m}^*:\mathrm{Tr}(x^{2})=0\}$.

By Lemma \ref{lem:length}, the length $n$ of $C_D$ is given by
\begin{equation}\label{eq:lengthCD}
n =n_0-1= \left\{\begin{array} {lll} p^{ m-1}-1        && if~~ m~~ odd ,\\
 p^{m-1}-1-(-1)^{\frac{m}{ 2}(\frac{p-1}{2})^2}(p-1)p^{\frac{m-2}{2}} && if~~ m~~ even  .\\
\end{array}
\right.
\end{equation}

Clearly $a=0$ gives the zero codeword and the contribution to the complete
weight enumerator is
$w_0^{n}.$

Assume that $a\in\mathbb{F}_{p^m}^*$ for the rest of the proof. To determine the complete weight enumerator of each codeword $$(\mathrm{Tr}(ad_1),\mathrm{Tr}(ad_2),\cdots,\mathrm{Tr}(ad_n)),$$ we need to consider the number of solutions $x\in \mathbb{F}_{p^m}^*$ satisfying $\mathrm{Tr}(x^2)=0$ and $\mathrm{Tr}(ax)=\rho$ with $\rho\in \mathbb{F}_{p}$, i.e.,
\begin{eqnarray*}
     n_a(\rho)=\#\{x\in \mathbb{F}_{p^m}^*:\mathrm{Tr}(x^2)=0  ~~ \mathrm{and }~~ \mathrm{Tr}(ax)=\rho\}.
\end{eqnarray*}

It is clearly that
\begin{eqnarray*}
     n_a(\rho)=\left\{\begin{array}{lll}N_a(\rho)-1,&&\mathrm{if}~~\rho=0,\\
N_a(\rho),&&\mathrm{if}~~\rho\neq0.\\\end{array} \right.
\end{eqnarray*}

When $m\geqslant 3$ is odd, it follows from Lemma \ref{lem:Na rho} that
\begin{equation*}
   n_a(\rho)=\left\{\begin{array}{lll}
   p^{m-2}-1 &&\mathrm{if}~~ \mathrm{Tr}(a^2)=0~~\mathrm{and}~~\rho=0,\\
   p^{m-2} &&\mathrm{if}~~ \mathrm{Tr}(a^2)=0~~\mathrm{and}~~\rho\neq0,\\
   p^{m\!-\!2}\!\!-\!\!1\!+\!(\!-\!1)^{\frac{m\!-\!1}{2}\frac{p\!-\!1}{2}}(p\!-\!1)p^{\frac{m\!-\!3}{2}}\bar{\eta}(\mathrm{Tr}(\!a^2\!))
     &&\mathrm{if}~~ \mathrm{Tr}(a^2)\neq0~~\mathrm{and}~~\rho=0,\\
   p^{m-2}-(-1)^{\frac{m-1}{2}\frac{p-1}{2}}p^{\frac{m-3}{2}}\bar{\eta}(\mathrm{Tr}(a^2))
   &&\mathrm{if}~~ \mathrm{Tr}(a^2)\neq0~~\mathrm{and}~~\rho\neq0,\\
\end{array} \right.
 \end{equation*}

When $m\geqslant 2$ is even, Lemma \ref{lem:Na rho} shows that
 \begin{equation*}
   n_a(\rho)=\left\{\begin{array}{lll}
   p^{m-2}-1-(-1)^{\frac{m}{2}(\frac{p-1}{2})^2}(p-1)p^{\frac{m-2}{2}}&&\mathrm{if}~~ \mathrm{Tr}(a^2)=0~~\mathrm{and}~~\rho=0\\
   p^{m-2}&&\mathrm{if}~~ \mathrm{Tr}(a^2)=0~~\mathrm{and}~~\rho\neq0\\
   p^{m-2}-1&&\mathrm{if}~~ \mathrm{Tr}(a^2)\neq0~~\mathrm{and}~~\rho=0\\
   p^{m-2}-(-1)^{\frac{m}{2}(\frac{p-1}{2})^2}p^{\frac{m-2}{2}}&&\mathrm{if}~~ \mathrm{Tr}(a^2)\neq0~~\mathrm{and}~~\rho\neq0\\
\end{array} \right.
 \end{equation*}

The desired conclusions of Theorem \ref{thcwe:CD}
then follow from Lemmas \ref{lem:length} and \ref{lem:ti}.

\subsection{The proof of Theorem \ref{thcwe:CDb}}\label{sec:proofthCDb}
Recall that
$$C_{D,b}=\{(\mathrm{Tr}(ad_1)+b,\mathrm{Tr}(ad_2)+b,\cdots,\mathrm{Tr}(ad_n)+b):a\in \mathbb{F}_{p^m},b\in \mathbb{F}_p\},$$where
$ D=\{x\in \mathbb{F}_{p^m}^*:\mathrm{Tr}(x^{2})=0\}$. Obviously, $C_{D,b}$ has the same length $n$ of \eqref{eq:lengthCD} as that of $C_D$.

The complete weight enumerator of $C_{D,b}$ can be explicitly determined by distinguishing the following cases.

\emph{Case 1:} $a=0$ and $b \in \mathbb{F}_p$.

It can be seen that the corresponding codeword of length $n$ contains $b$ in each coordinate position, which contributes to the complete
weight enumerator $w_b^n$ for each $b \in \mathbb{F}_p$. Then the total contribution of such terms to the complete weight enumerator is $$\sum_{b=0}^{p-1}w_b^n.$$

\emph{Case 2:} $a\neq0$ and $b\in \mathbb{F}_p$.

In this case, we consider
\begin{eqnarray*}
     n_{a,b}(\rho+b)&=&\#\{x\in \mathbb{F}_{p^m}^*:\mathrm{Tr}(x^2)=0 ~ \mathrm{and} ~ \mathrm{Tr}(ax)+b=\rho+b\},
\end{eqnarray*} which leads to
$$n_{a,b}(\rho+b)=n_a(\rho).$$

For a fixed $a\in \mathbb{F}_{p^m}^*$, let $(\mathrm{Tr}(ad_1),\mathrm{Tr}(ad_2),\cdots,\mathrm{Tr}(ad_n))$ be a nonzero codeword in $C_D$
with complete weight enumerator
$$w_0^{n_a(0)}w_1^{n_a(1)}\cdots w_{p-1}^{n_a(p-1)}.$$
By the definition of $C_{D,b}$, for a fixed $b\in\mathbb{F}_p$, the corresponding nonzero codeword in $C_{D,b}$ is $$(\mathrm{Tr}(ad_1)+b,\mathrm{Tr}(ad_2)+b,\cdots,\mathrm{Tr}(ad_n)+b),$$
and its contribution to the complete weight enumerator is
$$w_b^{n_a(0)}w_{1+b}^{n_a(1)}\cdots w_{p-1+b}^{n_a(p-1)}.$$
As $b$ runs through $\mathbb{F}_p$, this means that the contributions of such terms to the complete weight enumerator are of the form
$$\sum_{b=0}^{p-1}w_b^{n_a(0)}w_{1+b}^{n_a(1)}\cdots w_{p-1+b}^{n_a(p-1)}.$$

Therefore, the desired conclusions follow from Theorem \ref{thcwe:CD}.

\section{Concluding remarks}\label{sec:conclusion}

In this paper, we proposed the complete weight enumerators of two classes of
the linear codes $C_D$ and $C_{D,b}$ with defining set $D$ for the case of $\mathrm{gcd}(d,(p^m-1)/2)=1$. The ideas of the proofs of Theorems \ref{thcwe:CD} and \ref{thcwe:CDb} came from~\cite{ding2015twodesign,ding2015twothree}.

It should be pointed out that the weight enumerator of $C_D$ was determined in~\cite{ding2015twothree}. And we described the weight enumerator of $C_{D,b}$ which follows
directly from its complete weight enumerator. Some examples were given to confirm our conclusions.


\section{acknowledgements}
The work of Zheng-An Yao is partially supported by the NSFC (Grant No.11271381), the NSFC (Grant No.11431015)
and China 973 Program (Grant No. 2011CB808000).
This work is also partially supported by the NSFC (Grant No.61472457) and Guangdong Natural Science
Foundation (Grant No. 2014A030313161).

\ifCLASSOPTIONcaptionsoff
  \newpage
\fi

\end{document}